\newtheorem{define}{Definition}
\newtheorem{thm}{Theorem}
\newtheorem{lemma}{Lemma}
\newtheorem{remark}{Remark}
\newtheorem{assump}{Assumption}
\newtheorem{prop}{Proposition}
\begin{document}
\begin{frontmatter}
\title{Adaptive-Gain Second-Order Sliding Mode Observer Design for Switching Power Converters}
\author{Jianxing Liu}
\author{S.Laghrouche}
\author{M.Harmouche}
\author{M.Wack}
\address{Laboratoire IRTES, Universit\'e de Technologie de Belfort-Montb\'eliard, Belfort, France.\\
(e-mail: Jiang-Xing.Liu@utbm.fr; salah.laghrouche@utbm.fr; Mohamed.Harmouche@utbm.fr; maxime.wack@utbm.fr)}
\begin{abstract}                
In this paper, a novel adaptive-gain,
Second Order Sliding Mode (SOSM) observer
for multi-cell converters is designed by considering it as a type
of hybrid system.
The objective is to reduce the number of voltage sensors by
estimating the capacitor voltages from only measurements of the load current.
The proposed observer is proven to be robust in
the presence of perturbations with \emph{unknown} boundaries.
However, the states of the system are only partially observable
based on the observability matrix rank condition.
Because its observability depends upon the switching control signals,
a recent concept known as $Z(T_N)$-observability,
which can be used to analyze the observability of hybrid systems,
is used to address the switching behavior.
Under certain conditions of the switching sequences,
the voltage across each capacitor becomes observable.
Simulation results and comparisons with a Luenberger switched observer
demonstrate the effectiveness and the robustness of the proposed observer
with respect to output measurement noise and system uncertainties (load variations).
\end{abstract}
\begin{keyword}
sliding mode observer; hybrid systems; observability;
multi-cell power converter
\end{keyword}
\end{frontmatter}
\section{INTRODUCTION}
In recent years, industrial applications requiring high power levels have used medium-voltage semiconductors \citep{meynard1992multi,rodriguez2002multilevel,rech2007hybrid,gerry2003high}.
 Because of the efficiency requirements, the power of the converter is generally increased by
boosting the voltage.
However, medium-voltage switching devices are not available.
Even if they did exist, the volume and the cost of such devices would
be substantial \citep{1036749}.
In this sense, the topology of multi-level converters,
which have been studied during the last decade,
becomes attractive for high voltage applications \citep{meynard1992multi}.
From a practical point of view, the series of a multi-cell chopper
designed by the LEEI (Toulouse, France) \citep{bensaid},
leads to a safe series association of components working in a switching mode.
This structure offers the possibility of reducing the voltage constraints
evenly among each cell in a series.
These lower-voltage switches result in
lower conduction losses and higher switching frequencies.
Moreover, it is possible to improve the output waveforms
using this structure \citep{954041,1036749,1023142}.
These flying capacitors have to be balanced to guarantee the desired voltage
values at the output, which ensures that the maximum
benefit from the multi-cell structure is obtained \citep{meynard1997modeling}.
These properties are lost if the
capacitor voltage drifts far from the desired value \citep{bejaranomulticell}.
Therefore, a suitable control of the switches is required to generate
the desired values of the capacitor voltages.
The control of switches allows the current harmonics
at the cutting frequency to be canceled and
the ripple of the chopped voltage to be reduced \citep{djemai2011high,Defoortmulticell}.

Several control methods have been proposed for multi-cell converters,
such as nonlinear control based on input-output linearization \citep{1036749},
predictive control \citep{defay2008predictive}, hybrid control \citep{baja2007hybrid},
model predictive control \citep{defay2008predictive,lezana2009model}
and sliding mode control
\citep{djemai2011high,5991333,Meradi2013128}.
However, most of these techniques require measurements of the voltages of the capacitors to design the controller. That is, extra voltage sensors are necessary, which increases the
cost and the complexity of the system. Hence, the estimation of the capacitor voltages using an observer has attracted great interest \citep{besancon2007nonlinear}.

It should be noted that the states of the multi-cell system are only partially observable because the observability matrix never has full rank \citep{besancon2007nonlinear}.
Hence, the observability matrix rank condition cannot be employed
in an observability analysis of a hybrid system
such as the one considered here \citep{vidal2003observability,babaali2005observability}.
A recent concept, $Z(T_N)$-observability \citep{KANG}, can be
used to analyze the observability of a switched hybrid system and is applied in this work because the observability of the converter depends upon the switching control signals.
Various observers have been designed for the multi-cell converters based on
concepts such as
homogeneous finite-time observers \citep{Defoortmulticell},
super-twisting sliding mode observers \citep{bejaranomulticell,ghanes2009sliding}
and adaptive observer \citep{bejaranomulticell}.
The concept of observability presented in \cite{KANG} gives the condition under which there exists
a hybrid time trajectory that makes the system observable.
Using this concept,
estimates of the capacitor voltages can be obtained from the measurements of
the load current and the source voltage by taking advantage of the appropriate
hybrid time trajectories.

In this paper,
an observability analysis based on the results of \cite{KANG}
is performed for the multi-cell converter assuming measurements of the load
current and the source voltage under certain conditions of the switching input sequences.
Then, a novel adaptive-gain SOSM observer for multi-cell converters
is introduced that takes into account certain perturbations (load variations)
in which the boundaries of their first time derivatives are \emph{unknown}.
The proposed adaptive-gain SOSM algorithm combines the nonlinear term of
the super-twisting algorithm and a linear term, the so-called SOSML algorithm \citep{moreno2008lyapunov}.
The behavior of the SOSM algorithm near the origin is significantly improved compared with the linear case.
Conversely, the additional linear term improves the behavior of the SOSM algorithm when the states are far from the origin.
Therefore, the SOSML algorithm inherits the best properties
of both the linear and the nonlinear terms.
An adaptive law of the gains of the SOSML algorithm is derived
via the so-called "time scaling" approach \citep{Respondek2004277}.
The output observation error and its first time derivative converge to
zero in finite time with the proposed SOSML observer
such that the equivalent output-error injection can be obtained directly.
Finally, the resulting reduced-order system is proven to be exponentially stable.
That is, the observer error for the capacitor voltages, which are considered as the states of the observer system, converge to zero exponentially.
The main advantages of the proposed adaptive-gain SOSML are that
only one parameter has to be tuned
and there are no a-priori requirements on the perturbation bounds.

This paper is organized as follows. In Section $\mathrm{II}$,
a model of the multi-cell converter and its characteristics are presented.
In Section $\mathrm{III}$, the observability of the multi-cell converter is studied with
the concept of $Z(T_N)$-observability.
Section $\mathrm{IV}$ discusses the design of the proposed adaptive-gain SOSML
observer for estimating the capacitor voltages.
Section $\mathrm{V}$ gives simulation results including a comparison with a Luenberger switched observer with disturbances.
\section{MODELING OF the MULTI-CELL CONVERTER}\label{sec:modelling}
The structure of a multi-cell converter is based on the combination of
a certain number of cells.
Each cell consists of an energy storage element and commutators \citep{1036749}.
The main advantage of this structure is that the spectral quality of the output signal is
improved by a high switching frequency between the intermediate
voltage levels \citep{mcgrath2007analytical}.
An instantaneous model that was presented in \cite{1036749} and
describes fully the hybrid behavior of the multi-cell converter
is used here.

Figure \ref{fig:multicell_model} depicts the topology of a converter with $p$ independent commutation cells that is connected to an inductive load.
The current $I$ flows from the source $E$ to the output
through the various converter switches.
The converter thus has a hybrid behavior because of the presence of both discrete variables (the switching logic)
and continuous variables (the currents and the voltages).
\begin{figure}[H]
  \begin{center}
  \includegraphics[width=8.4cm]{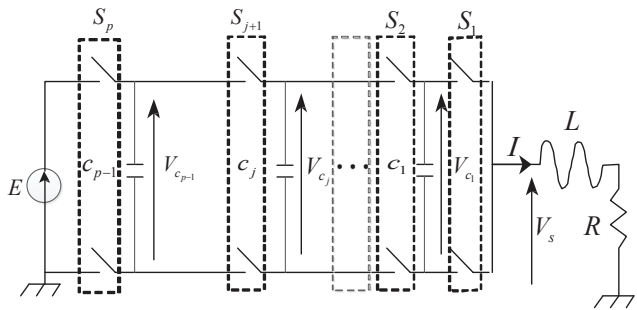}
  \caption{Multicell converter on RL load}
  \label{fig:multicell_model}
  \end{center}
\end{figure}

Through circuit analysis, the dynamics of the p-cell converter
were obtained as the following differential equations:
\begin{equation}\label{eqn:dynamics of converter}
\left\{
\begin{split}
\dot{I}&=-\frac{R}{L}I + \frac{E}{L}S_p - \sum_{j=1}^{p-1}\frac{V_{c_j}}{L}(S_{j+1}-S_j),\\
\dot{V}_{c_1}&=\frac{I}{c_1}(S_{2}-S_1),\\
&\vdots\\
\dot{V}_{c_{p-1}}&=\frac{I}{c_p}(S_{p}-S_{p-1}).
\end{split}
\right.
\end{equation}
where $I$ is the load current, $c_j$ is the $j^{th}$ capacitor, $V_{c_j}$ is the voltage of the $j^{th}$ capacitor
and $E$ is the voltage of the source. Each commutation cell is controlled by the binary input signal $S_j\in\{0,1\}$, where
$S_j=1$ indicates that the upper switch of the $j$th cell is on and the lower switch is off
and $S_j=0$ indicates that the upper switch is off and the lower switch is on.
The discrete inputs are defined as follows:
\begin{equation}\label{eqn:control input}
\left\{
\begin{split}
u_j=&S_{j+1}-S_j,\quad j=1,\ldots,p-1\\
u_p=&S_p.
\end{split}
\right.
\end{equation}

With equation (\ref{eqn:control input}),
the system (\ref{eqn:dynamics of converter}) can be represented as follows:
\begin{equation}\label{eqn:system dynamic}
\left\{
\begin{split}
&\dot{I}=-\frac{R}{L}I + \frac{E}{L}u_p - \sum_{j=1}^{p-1}\frac{V_{c_j}}{L}u_j,\\
&\dot{V}_{c_1}=\frac{I}{c_1}u_1,\\
&\vdots\\
&\dot{V}_{c_{p-1}}=\frac{I}{c_{p-1}}u_{p-1},\\
&y=I.
\end{split}
\right.
\end{equation}

Assuming that only the load current $I$ can be measured,
it is easy to represent the system (\ref{eqn:system dynamic}) as
a hybrid (switched affine) system:
\begin{equation}\label{eqn:hybrid dynamic1}
\left\{
\begin{split}
\dot{x} =& f(x,u)=A(u)x + B(u),\\
y=&h(x,u)=Cx.
\end{split}
\right.
\end{equation}
where
$x=
\begin{bmatrix}
    I&V_{c_1}&\cdots&V_{c_{p-1}}
\end{bmatrix}
^T$
is the continuous state vector,\\
$u=
\begin{bmatrix}
    u_1&u_2&\cdots&u_p
\end{bmatrix}
^T$
is the switching control signal vector which takes only discrete values
and the matrices $A(u)$, $B(u)$, $C$ are defined as:
\begin{equation}
\begin{split}
A(u)&=
\begin{bmatrix}
    -\frac{R}{L}&-\frac{u_1}{L}&\cdots&-\frac{u_{p-1}}{L}\\
    \frac{u_1}{c_1}&0&\cdots&0\\
    \vdots&\vdots&\ddots&\vdots\\
    \frac{u_{p-1}}{c_{p-1}}&0&\cdots&0
\end{bmatrix}
,\\
B(u)&=
\begin{bmatrix}
    \frac{E}{L}u_p&0&\cdots&0
\end{bmatrix}
^T,
\\
C&=
\begin{bmatrix}
    1&0&\cdots&0
\end{bmatrix}
.
\end{split}
\end{equation}

The main objective of this paper is to design an observer based on the instantaneous model (\ref{eqn:system dynamic}) that is able to estimate
the capacitor voltages using only
the measurement of the load current and the associated switching control input (which is assumed to be known).
\section{HYBRID OBSERVABILITY ANALYSIS}\label{sec:Observability Analysis}
From the instantaneous model of the system (\ref{eqn:hybrid dynamic1}) with $p\geq 3$,
it can be noted that there are several switching modes
that make the system unobservable.
For instance, if $u_1=u_2=\cdots=u_{p-1}=0$, the voltages $V_{c_j}(j={1,\ldots,p-1})$
become completely unobservable.
These switching modes are not affected by the capacitor voltages.
Fortunately, these cases are ones in which the $p$-cells are not switching and will not occur for all control sequences;
otherwise, there is no interest in the physical sense.

The observability analysis of the system (\ref{eqn:hybrid dynamic1}) is based
on the measurement of the load current $I$ and the knowledge of the control input sequence $u$.
The so-called observability matrix \citep{besancon2007nonlinear} is
defined as
\begin{equation}\label{matrix:observability}
\begin{split}
\mathcal{O}_{p\times p}&=
\begin{bmatrix}
    C\\CA\\CA^2\\\vdots\\CA^{p-1}
\end{bmatrix}
=
\begin{bmatrix}
    1&0&\cdots&0\\
    -\frac{R}{L}&-\frac{u_1}{L}&\cdots&-\frac{u_{p-1}}{L}\\
    (\frac{R}{L})^2-\sum^{p-1}_{i=1}\frac{u^2_i}{Lc_i}&\frac{Ru_1}{L^2}&\cdots&\frac{Ru_{p-1}}{L^2}\\
    \vdots&\vdots&\vdots&\vdots\\
\end{bmatrix}
.
\end{split}
\end{equation}
With simple computations, it can be shown that
\begin{equation}\label{matrix:observability threecells}
\begin{split}
rank(\mathcal{O})&=2<p.
\end{split}
\end{equation}
It follows that the continuous states are not observable using
only the load current because
the observability matrix (\ref{matrix:observability threecells})
is not full rank.

Because of the switching sequences of the system (\ref{eqn:hybrid dynamic1}),
the observability is strongly linked to the hybrid behavior.
Therefore, the recently developed concept of $Z(T_N)$-observability \citep{KANG}
is applied to analyze the observability of the hybrid system (\ref{eqn:hybrid dynamic1}).
It is important to note the following definitions.
\begin{define}\cite{KANG}\label{define:hybrid time trajectory}
A hybrid time trajectory is a finite or infinite sequence of intervals $T_N={\Gamma}^N_{i=0}$ such that
\begin{itemize}
  \item $\Gamma_i=[t_{i,0},t_{i,1}), for $\ $ all $\ $0\leq i<N$;
  \item For all $i<N, $\ $t_{i,1}=t_{i+1,0}$;
  \item $t_{0,0}=t_{ini} $\ $ and $\ $ t_{N,1}=t_{end}$.
\end{itemize}
Moreover, $\langle T_N\rangle$ is defined as the ordered list of inputs $u$ associated with $T_N$, ${u^i}_{i=0,N}$,
where $u^i$ is the value of $u$ on the interval $\Gamma_i$.
\end{define}
\begin{define}\citep{KANG}\label{define:2}
The function $z=Z(t,x)$ is said to be $Z-$observable with respect to the hybrid time trajectory $T_N$
and $\langle T_N\rangle$ if for any two trajectories $(t,x,u)$ and $(t,x',u')$
defined in $[t_{ini},t_{end}]$, the equality $h(x,u)=h(x',u')$ implies that $Z(t,x)=Z(t,x')$.
\end{define}
\begin{lemma}\citep{KANG}\label{define:Zobservability}
Consider the system (\ref{eqn:hybrid dynamic1}) and a fixed hybrid time trajectory $T_N$ and $\langle T_N\rangle$.
Suppose that $z=Z(t,x)$ is always continuous under any admissible control input.
If there exists a sequence of projections ${P_i},$\ $i=0,1,\cdots,N$, such that
\begin{itemize}
  \item For all $i<N$, $P_iZ(t,x)$ is $Z-$observable for $t\in \Gamma_i$;
  \item Rank$([P^T_0,\cdots,P^T_N])=dim(z)$;
  \item $\frac{d\bar{P}_iZ(t,x)}{dt}=0$, for $t\in \Gamma_i$, where $\bar{P}$ is the complement
  of $P$(projecting $z$ to the variables eliminated by $P$),
\end{itemize}
then, $z=Z(t,x)$ is $Z-$observable with respect to the hybrid time trajectory $T_N$ and $\langle T_N\rangle$.
\end{lemma}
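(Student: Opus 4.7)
The plan is to introduce the deviation $\delta Z(t) := Z(t,x(t)) - Z(t,x'(t))$ along the two candidate trajectories $(t,x,u)$ and $(t,x',u')$ that share the same output, and to prove $\delta Z \equiv 0$ on $[t_{ini},t_{end}]$; this is exactly the conclusion demanded by $Z$-observability. Because $Z$ is, by hypothesis, continuous along every admissible trajectory, $\delta Z$ is a continuous function of time, in particular across each switching instant $t_{i,1}=t_{i+1,0}$.

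First I would work interval by interval. Fix an index $i$ and restrict both trajectories to $\Gamma_i$. The first bullet applied to $P_i Z$ gives $P_i\,\delta Z(t)=0$ for all $t\in\Gamma_i$, since the two trajectories yield identical outputs on that interval. The third bullet applied along each of the two trajectories, via the chain rule, says that $\bar P_i Z(t,x(t))$ and $\bar P_i Z(t,x'(t))$ are both constant on $\Gamma_i$; hence $\bar P_i\,\delta Z(t)$ is constant there. Using the decomposition $\delta Z = P_i\,\delta Z + \bar P_i\,\delta Z$ with $\bar P_i = I-P_i$, the first summand vanishes while the second is constant, so $\delta Z$ itself is constant on $\Gamma_i$ and takes its value in $\ker P_i$.

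Next I would glue the pieces together. Continuity of $\delta Z$ at every $t_{i,1}=t_{i+1,0}$ forces the constant value on $\Gamma_i$ to coincide with the one on $\Gamma_{i+1}$, and iterating along the full sequence of intervals produces a single vector $c\in\mathbb{R}^{\dim z}$ with $\delta Z\equiv c$ on $[t_{ini},t_{end}]$ and $P_i c = 0$ for every $i=0,\ldots,N$. The second bullet asserts that the vertical concatenation of the $P_i$'s has rank $\dim z$, i.e., is injective as a linear map; the relations $P_i c=0$ therefore force $c=0$, and so $\delta Z\equiv 0$, as required.

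The step I expect to require the most care is the exploitation of bullet (iii). I would first upgrade the differential identity $\frac{d}{dt}\bar P_i Z(t,x)=0$ into genuine constancy of $\bar P_i Z(t,x(t))$ on each half-open $\Gamma_i$, which is a standard chain-rule argument along the trajectory of the active mode, and I would then invoke the continuity hypothesis on $Z$ to match the values of $\delta Z$ across successive switching boundaries. Once these two technicalities are in place, the remainder reduces to the clean linear-algebraic fact that a finite family of projections whose vertical concatenation has full column rank must have trivial common kernel.
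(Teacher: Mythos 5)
Your argument is correct and complete. Note that the paper itself offers no proof of this lemma --- it simply defers to the cited reference \cite{KANG} --- so there is no in-paper argument to compare against; what you have written is a faithful, self-contained reconstruction of the standard proof: on each $\Gamma_i$ the $Z$-observability of $P_iZ$ annihilates the $P_i$-components of the deviation $\delta Z$, the third bullet freezes the $\bar P_i$-components, continuity of $Z$ glues the resulting constants across the switching instants $t_{i,1}=t_{i+1,0}$, and the rank condition on the stacked projections forces the common constant to vanish. Two small points of care. First, writing $\bar P_i = I-P_i$ is only literally correct if the $P_i$ are square idempotent matrices; in the intended reading $P_i$ and $\bar P_i$ are complementary coordinate selections (so $P_i$ is generally not square), and the accurate statement is that the pair $\left(P_i\,\delta Z,\ \bar P_i\,\delta Z\right)$ determines $\delta Z$ coordinatewise --- each coordinate is either annihilated or frozen on $\Gamma_i$, hence $\delta Z$ is constant there with value in $\ker P_i$, and your proof proceeds unchanged. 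Second, the lemma as printed quantifies the observability hypothesis over $i<N$ while the rank condition involves $P_0,\dots,P_N$; your proof implicitly (and reasonably) assumes the hypothesis holds on every interval whose projection enters the rank condition, which is surely the intended reading.
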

\begin{proof}
The proof of the Lemma \ref{define:Zobservability}
can be found in \cite{KANG}.
\end{proof}
\begin{remark}\label{remark:1}
In Lemma \ref{define:Zobservability}, the third condition requires that the components of $Z$
that are not observable in $\Gamma_i$ must remain constant within this time interval.
The hybrid time trajectory $T_N$ and $\langle T_N\rangle$
influences the observability property in a way similar
to an input.
\end{remark}

Table \ref{Different modes} presents the eight possible configurations for a three-cell converter. The application of Lemma \ref{define:Zobservability} to the three-cell converter is as follows.
We take $Z(t,x)=\begin{bmatrix}x_2&x_3\end{bmatrix}^T=\begin{bmatrix}V_{c_1}&V_{c_2}\end{bmatrix}^T$.
For the discrete switching conditions $[0,0,0]$ and $[1,1,1]$, it can be verified that $Z(t,x)$ is
not $Z(T_N)$-observable. Fortunately, from (\ref{eqn:dynamics of converter}) the dynamics of $V_{c_1}$ and $V_{c_2}$ are zero, which means that
these states remain constant during these time intervals.
Next, assume that a trajectory of the system has the status $[1,0,0]$ and $[1,1,0]$ during time intervals $\Gamma_1$ and $\Gamma_2$, respectively.
Let us define $P_1=\begin{bmatrix}1&0\end{bmatrix}$ and $P_2=\begin{bmatrix}0&1\end{bmatrix}$.
We have $\bar{P}_1Z=x_3=V_{c_2}$, $\frac{d\bar{P}_1Z}{dt}=\frac{dV_{c_2}}{dt}=0$,
$\bar{P}_2Z=x_2=V_{c_1}$, $\frac{d\bar{P}_2Z}{dt}=\frac{dV_{c_1}}{dt}=0$
and $rank\begin{bmatrix}P_1\\P_2\end{bmatrix}=2$.
All the assumptions in Lemma \ref{define:Zobservability} are satisfied;
therefore, $Z(t,x)=\begin{bmatrix}V_{c_1}&V_{c_2}\end{bmatrix}^T$ is $Z(T_N)$-observable.
\begin{table}[h]
\begin{center}
\caption{Switching modes and capacitor voltages for a three-cell converter}
\label{Different modes}
\begin{tabular}{cccccl}
\hline
Mode : $[S_1,S_2,S_3]$ & $V_{c_1}$&$V_{c_2}$&$u_1$&$u_2$&Observable States\\
\hline
0 : [0,0,0] &$\rightsquigarrow$&$\rightsquigarrow$&0&0&$I$\\
\hline
1 : [0,0,1] &$\rightsquigarrow$&$\nearrow$&0&1&$I,V_{c_2}$\\
\hline
2 : [0,1,0] &$\nearrow$&$\searrow$&1&-1&$I,V_{c_1},V_{c_2}$\\
\hline
3 : [0,1,1] &$\nearrow$&$\rightsquigarrow$&1&0&$I,V_{c_1}$\\
\hline
4 : [1,0,0] &$\searrow$&$\rightsquigarrow$&-1&0&$I,V_{c_1}$\\
\hline
5 : [1,0,1] &$\searrow$&$\nearrow$&-1&1&$I,V_{c_1},V_{c_2}$\\
\hline
6 : [1,1,0] &$\rightsquigarrow$&$\searrow$&0&-1&$I,V_{c_2}$\\
\hline
7 : [1,1,1] &$\rightsquigarrow$&$\rightsquigarrow$&0&0&$I$\\
\hline
\end{tabular}
\end{center}
\end{table}

The symbols in Table \ref{Different modes} are defined as follows:
$\rightsquigarrow$ indicates a constant value, $\nearrow$ indicates increasing and $\searrow$ indicates decreasing.

In the next section, an adaptive-gain SOSML observer will
be presented for the converter system (\ref{eqn:system dynamic}).
\section{ADAPTIVE-GAIN SOSML OBSERVER DESIGN}
As discussed in \cite{1036749}, active control of the multi-cell converter
requires the knowledge of the capacitor voltages.
Usually, voltage sensors are used to measure the capacitor voltages.
However, the extra sensors increase the cost, the complexity and the size,
especially in high-voltage applications.
Moreover, any sensors will introduce the measurement noise which
will be directly transposed to the estimated value.
Therefore, the design of a state observer using only the measurement of load current and the associated switching inputs is desirable.

In this section, an adaptive-gain SOSML observer
for the three-cell converter ($p=3$) is presented that is robust to perturbations (load variations) for which the boundaries of the first time derivative are \emph{unknown}.
A novel adaptive law for the gains of the SOSML algorithm
with only one tuning parameter is designed via
the so-called "time scaling" approach \citep{Respondek2004277}.
The proposed approach does not require the a-priori knowledge of the perturbation bounds.

Defining $e_1=I-\hat{I}$,
the system (\ref{eqn:system dynamic}) is rewritten to include
the perturbation $\tilde{f}(e_1)$,
i.e., the load resistance uncertainty \citep{Defoortmulticell},
\begin{equation}\label{eqn:new system dynamic}
\left\{
\begin{split}
&\dot{I}=-\frac{R}{L}I + \frac{E}{L}u_3 - \frac{V_{c_1}}{L}u_1-
\frac{V_{c_2}}{L}u_2 + \tilde{f}(e_1),\\
&\dot{V}_{c_1}=\frac{u_1}{c_1}I,\\
&\dot{V}_{c_2}=\frac{u_2}{c_2}I.
\end{split}
\right.
\end{equation}
The proposed observer is formulated as
\begin{equation}\label{eqn:stw observer}
\left\{
\begin{split}
&\dot{\hat{I}}=-\frac{R}{L}I + \frac{E}{L}u_3 - \frac{\hat{V}_{c_1}}{L}u_1-
\frac{\hat{V}_{c_2}}{L}u_2 + \mu(e_1),\\
&\dot{\hat{V}}_{c_1}=\frac{u_1}{c_1}I + k_1\mu(e_1),\\
&\dot{\hat{V}}_{c_2}=\frac{u_2}{c_2}I + k_2\mu(e_1).
\end{split}
\right.
\end{equation}
where $\mu(\cdot)$ is the SOSML algorithm \citep{moreno2008lyapunov},
\begin{equation}\label{eqn:stw definition}
\mu(e_1)=\lambda(t) |e_1|^{\frac{1}{2}}sign(e_1)+\alpha(t)
\int^t_0 sign(e_1)d\tau+k_\lambda(t)e_1+k_\alpha(t)\int^t_0 e_1d\tau,
\end{equation}
and the adaptive gains
$\lambda(t),\alpha(t),k_\lambda(t),k_\alpha(t)$ and
the design parameters $k_1$ and $k_2$
are to be defined.

Define the observation errors as
\begin{equation}\label{eqn:error define}
\left\{
\begin{split}
e_2=&V_{c_1}-\hat{V}_{c_1},\\
e_3=&V_{c_2}-\hat{V}_{c_2}.
\end{split}
\right.
\end{equation}
Equations (\ref{eqn:new system dynamic}) and (\ref{eqn:stw observer})
yield the observation error dynamics as:
\begin{eqnarray}
\dot{e}_1 &=& -\mu(e_1) -\frac{u_1}{L}e_2 - \frac{u_2}{L}e_3 + \tilde{f}(e_1),
\label{eqn:dynamicobservererror1}\\
\dot{e}_2 &=& -k_1\mu(e_1),
\label{eqn:dynamicobservererror2}\\
\dot{e}_3 &=& -k_2\mu(e_1).
\label{eqn:dynamicobservererror3}
\end{eqnarray}

In this paper, the adaptive gains $\lambda(t),\alpha(t),k_\lambda(t)$ and $k_\alpha(t)$
are formulated as
\begin{equation}\label{eqn:adaptive gains}
\left\{
\begin{split}
\lambda(t)&=\lambda_0\sqrt{l(t)},\\
\alpha(t)&=\alpha_0l(t),\\
k_\lambda(t)&=k_{\lambda_0}l(t),\\
k_\alpha(t)&=k_{\alpha_0}l^2(t).
\end{split}
\right.
\end{equation}
where $\lambda_0,\alpha_0,k_{\lambda_0}$ and $k_{\alpha_0}$ are
positive constants to be defined and $l(t)$ is a positive, time-varying, scalar function.

The adaptive law of the time-varying function $l(t)$ and
the design parameters $k_1$ and $k_2$ are given by:
\begin{equation}\label{eqn:adaptive gains rt}
\dot{l}(t) =
\begin{cases}
  k, \quad\quad if \quad |e_1|\neq0\\
  0, \quad\quad else
\end{cases}
\end{equation}
\begin{equation}\label{ineqn:k1k2}
\begin{split}
k_1=&
\begin{cases}
  -\kappa u_1, \ \ if \ \ |e_1|=0\\
  0. \ \ \quad\quad else
\end{cases}
,\quad
k_2=
\begin{cases}
  -\kappa u_2, \ \  if \ \ |e_1|=0\\
  0. \ \ \quad\quad else
\end{cases}
\end{split}
\end{equation}
where $k$, the initial value $l(0)$  and $\kappa$ are positive constants.
\begin{assump}\label{assump:1}
The system (\ref{eqn:new system dynamic}) and the observer system
(\ref{eqn:stw observer}) are bounded input, bounded state (BIBS)
because this is a physical system \rm{\citep{perruquetti2002sliding}}.
\end{assump}
\begin{assump}\label{assump:5}
There is a $T_N$ such that $z=x=\begin{bmatrix}I&V_{c_1}&V_{c_2}\end{bmatrix}^T$
is Z-observable under the condition
of Lemma \ref{define:Zobservability} \rm{\citep{Defoortmulticell}}.
\end{assump}
\begin{thm}\label{theorem:5}
Consider the error system (\ref{eqn:dynamicobservererror1}) under Assumptions
(\ref{assump:1}) and (\ref{assump:5}).
Assume that the perturbation $\tilde{f}(e_1)$ satisfies the following condition:
\begin{equation}\label{condition1}
\begin{split}
\left\vert\dot{\tilde{f}}(e_1)\right\vert\leq \chi_1,\ \ and \ \
\tilde{f}(0)=0.
\end{split}
\end{equation}
where $\chi_1$ is an \emph{unknown} positive constant.
Then, the trajectories of the error system (\ref{eqn:dynamicobservererror1})
converge to zero in finite time with the adaptive gains in
(\ref{eqn:adaptive gains}) and (\ref{eqn:adaptive gains rt}) satisfying the following condition:
\begin{equation}\label{ineqn:condition1}
    4\alpha_0k_{\alpha_0}>8k^2_{\lambda_0}\alpha_0+9\lambda^2_0k^2_{\lambda_0},
\end{equation}
\end{thm}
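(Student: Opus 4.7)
The plan is to cast \eqref{eqn:dynamicobservererror1} as a perturbed SOSML system in the spirit of Moreno--Osorio and to exploit the specific structure of the adaptive gains \eqref{eqn:adaptive gains} through the time-scaling technique of Respondek, so that a single quadratic Lyapunov function proves finite-time convergence. By Assumption \ref{assump:1} (BIBS), the lumped disturbance $d(t):=\tilde f(e_1)-\tfrac{u_1}{L}e_2-\tfrac{u_2}{L}e_3$ and its derivative are bounded on any finite interval; the fact that $|\dot{\tilde f}|$ is bounded only by an \emph{unknown} constant $\chi_1$ is precisely what motivates the growing-gain law \eqref{eqn:adaptive gains rt}.

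First I would introduce the auxiliary variable
\begin{equation*}
x_2 := -\alpha(t)\!\int_0^t\!\mathrm{sign}(e_1)\,d\tau-k_\alpha(t)\!\int_0^t\!e_1\,d\tau+d(t),
\end{equation*}
so that the $e_1$-dynamics take the standard SOSML cascade form $\dot e_1=-\lambda(t)|e_1|^{1/2}\mathrm{sign}(e_1)-k_\lambda(t)e_1+x_2$, $\dot x_2=-\alpha(t)\mathrm{sign}(e_1)-k_\alpha(t)e_1+\rho(t)$, where $\rho$ collects $\dot d$ together with the $\dot l$-induced drift on the time-varying gains. Then I would apply the rescaling $d\sigma=\sqrt{l(t)}\,dt$: by construction of the powers of $l$ in \eqref{eqn:adaptive gains}, the transformed system carries the \emph{constants} $\lambda_0,\alpha_0,k_{\lambda_0},k_{\alpha_0}$ in place of the time-varying gains, while $\rho$ appears divided by $\sqrt{l}$ and is therefore eventually dominated as $l$ grows.

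In the rescaled time I would use the strict Lyapunov function $V(\zeta)=\zeta^{\top}P\zeta$ on $\zeta=\bigl[|e_1|^{1/2}\mathrm{sign}(e_1),\ e_1,\ x_2\bigr]^{\top}$ with a symmetric positive-definite matrix $P$. A direct computation gives an inequality of the form $\dot V\leq-\gamma_1 V^{1/2}-\gamma_2 V + O(1/\sqrt{l})$, after which a comparison argument yields finite-time convergence of $e_1$ and $\dot e_1$ to zero in the original time, while the rule $\dot l=0$ whenever $|e_1|=0$ keeps $l$ bounded throughout.

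The main obstacle is the Lyapunov-matrix design: one must produce a single $P\succ 0$ for which the quadratic form associated with $\dot V$ is negative definite uniformly in the sign of $e_1$ and the bounded perturbation. Parameterising $P$ with a few free entries and taking the Schur complement of its $2\times 2$ principal block reduces feasibility to a scalar inequality among $\alpha_0,k_{\alpha_0},\lambda_0,k_{\lambda_0}$: this is exactly \eqref{ineqn:condition1}, $4\alpha_0 k_{\alpha_0}>8k_{\lambda_0}^2\alpha_0+9\lambda_0^2 k_{\lambda_0}^2$. Once that inequality holds, a valid $P$ exists, the cross terms between the homogeneous and linear parts of the SOSML are controlled, and the remaining estimates needed to close the finite-time-convergence argument are routine.
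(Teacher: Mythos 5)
Your proposal follows essentially the same route as the paper: the same cascade rewriting with the auxiliary variable $\varphi_1$, the same quadratic-form Lyapunov function $V=\zeta^{T}P\zeta$ on $\zeta=[\,|e_1|^{1/2}\mathrm{sign}(e_1)\; \cdot\; ,\,e_1\,\cdot\,,\,\varphi_1]^{T}$, and the same comparison-lemma conclusion $\dot V\leq-c_1V^{1/2}-c_2V$, the only cosmetic differences being that the paper absorbs $l(t)$ into the state vector (via $\zeta_1=l^{1/2}|e_1|^{1/2}\mathrm{sign}(e_1)$, $\zeta_2=l e_1$) rather than reparametrizing time by $d\sigma=\sqrt{l}\,dt$, and that condition (\ref{ineqn:condition1}) is used to make the two matrices $\Omega_1,\Omega_2$ appearing in $-\dot V$ positive definite rather than to certify feasibility of $P$ itself (the paper's $P$ is positive definite by construction as a sum of squares). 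The plan is sound and matches the published argument.
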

\begin{proof}\label{pf:adaptive sliding}
The system (\ref{eqn:dynamicobservererror1}) can be rewritten as
\begin{equation}\label{eqn:two intergrals}
\left\{
\begin{split}
  \dot{e}_1 & =-\lambda(t)|e_1|^{\frac{1}{2}}sign(e_1)-k_\lambda(t)e_1+\varphi_1,\\
  \dot{\varphi}_1 & =-\alpha(t)sign(e_1) -k_\alpha(t)e_1+ \varrho_1.
\end{split}
\right.
\end{equation}
where $\varrho_1=\left(\dot{\tilde{f}}(e_1)-\frac{u_1}{L}\dot{e}_2 -
\frac{u_2}{L}\dot{e}_3\right)$.

Based on Assumption \ref{assump:1}, because the input $u$ is bounded,
the state does not go to infinity in finite time. Moreover, if $\hat{I}$ is bounded, all the states of the observer are also bounded for a finite time. Consequently,
the observation error $e_1$ is also bounded \citep{perruquetti2002sliding}.
It follows from (\ref{eqn:dynamicobservererror2}, \ref{eqn:dynamicobservererror3})
that $\dot{e}_2$ and $\dot{e}_3$ are bounded and satisfy
$|\dot{e}_2|\leq\chi_2$ and $|\dot{e}_3|\leq\chi_3$,
where $\chi_2$ and $\chi_3$ are some \emph{unknown} positive values.
From equation (\ref{condition1}), it is easy to deduce that
$|\varrho_1|\leq\chi_1+\frac{\chi_2}{L}+\frac{\chi_3}{L}=F$, where
$F$ is an \emph{unknown} positive value.

A new state vector is introduced to represent the system in (\ref{eqn:two intergrals})
in a more convenient form for Lyapunov analysis.
\begin{equation}\label{eqn:new state vector}
  \zeta = \begin{bmatrix}
    \zeta_1\\ \zeta_2\\ \zeta_3
    \end{bmatrix}
    =\begin{bmatrix}
    l^{\frac{1}{2}}(t)|e_1|^{\frac{1}{2}}sign(e_1)\\
    l(t)e_1\\ \varphi_1
    \end{bmatrix},
\end{equation}
Thus, the system in (\ref{eqn:two intergrals}) can be rewritten as
\begin{equation}\label{eqn:new vector dynamics}
\begin{split}
  \dot{\zeta} & =\frac{l(t)}{|\zeta_{1}|}
\underbrace{\begin{bmatrix}
    -\frac{\lambda_0}{2}&0&\frac{1}{2}\\0&-\lambda_0&0\\
    -\alpha_0&0&0
\end{bmatrix}}_{A_1}\zeta+l(t)
\underbrace{\begin{bmatrix}
    -\frac{k_{\lambda_0}}{2}&0&0\\
    0&-k_{\lambda_0}&1\\
    0&-k_{\alpha_0}&0
\end{bmatrix}}_{A_2}\zeta
+
\begin{bmatrix}
    \frac{\dot{l}}{2l(t)}\zeta_{1}\\ \frac{\dot{l}}{2l(t)}\zeta_{2}\\ \varrho_1
\end{bmatrix}
,
\end{split}
\end{equation}

Then, the following Lyapunov function candidate is introduced for the system
in (\ref{eqn:new vector dynamics}):
\begin{equation}\label{eqn:new lyapunov function1}
    V(\zeta)=2\alpha_0\zeta^2_1+k_{\alpha_0}\zeta^2_2+\frac{1}{2}\zeta^2_3
    +\frac{1}{2}\left(\lambda_0\zeta_1+k_{\lambda_0}\zeta_2-\zeta_3\right)^2,
\end{equation}
which can be rewritten as a quadratic form
\begin{equation}\label{eqn:lyapunov function1}
    V(\zeta)=\zeta^TP\zeta,\quad P =\frac{1}{2}
    \begin{bmatrix}
    4\alpha_0+\lambda^2_0&\lambda_0k_{\lambda_0}&-\lambda_0\\
    \lambda_0k_{\lambda_0}&k^2_{\lambda_0}+2k_{\alpha_0}&-k_{\lambda_0}\\
    -\lambda_0&-k_{\lambda_0}&2
    \end{bmatrix}.
\end{equation}
As (\ref{eqn:new lyapunov function1}) is a continuous Lyapunov function, the matrix $P$ is positive definite.

Taking the derivative of (\ref{eqn:lyapunov function1}) along
the trajectories of (\ref{eqn:new vector dynamics}),
\begin{equation}\label{eqn:derivative lyapunov}
\begin{split}
    \dot{V}=-\frac{l(t)}{|\zeta_{1}|}\zeta^T\Omega_1\zeta-
    l(t)\zeta^T\Omega_2\zeta+\varrho_1q_1\zeta+\frac{\dot{l}}{l(t)}q_2P\zeta,
\end{split}
\end{equation}
where $q_1=\begin{bmatrix}
    -\lambda_0&-k_{\lambda_0}&2
    \end{bmatrix}$,
$q_2=\begin{bmatrix}
    \zeta_1&\zeta_2&0
    \end{bmatrix}$,
and
\begin{equation}\label{eqn:matrix3}
\begin{split}
\Omega_1&=\frac{\lambda_0}{2}
\begin{bmatrix}
\lambda^2_0+2\alpha_0&0&-\lambda_0\\
0&2k_{\alpha_0}+5k^2_{\lambda_0}&-3k_{\lambda_0}\\
-\lambda_0&-3k_{\lambda_0}&1
\end{bmatrix}
,\\
\Omega_2&=k_{\lambda_0}
\begin{bmatrix}
\alpha_0+2\lambda^2_0&0&0\\
0&k_{\alpha_0}+k^2_{\lambda_0}&-k_{\lambda_0}\\
0&-k_{\lambda_0}&1
\end{bmatrix}.
\end{split}
\end{equation}
it is easy to verify that $\Omega_1$ and $\Omega_2$ are positive definite matrices under the condition in (\ref{ineqn:condition1}).

Because $\lambda_{min}(P)\|\zeta\|^2\leq V\leq \lambda_{max}(P)\|\zeta\|^2$,
Equation (\ref{eqn:derivative lyapunov}) can be rewritten as
\begin{equation}\label{eqn:derivative lyapunov1}
\begin{split}
    \dot{V}\leq-l(t)\frac{\lambda_{min}(\Omega_1)}{\lambda^{\frac{1}{2}}_{max}(P)}V^{\frac{1}{2}}
-l(t)\frac{\lambda_{min}(\Omega_2)}{\lambda_{max}(P)}V
+\frac{F\|q_1\|_2}{\lambda^{\frac{1}{2}}_{min}(P)}V^{\frac{1}{2}}
+\frac{\dot{l}}{2l(t)}\Delta\Omega
\end{split},
\end{equation}
where
\begin{equation}\label{eqn:deltaV}
\begin{split}
\Delta \Omega =&\left(
(4\alpha_0+\lambda^2_0)\zeta^2_1+2\lambda_0k_{\lambda_0}\zeta_1\zeta_2
+2k_{\alpha_0}k^2_{\lambda_0}\zeta^2_2-\lambda_0\zeta_1\zeta_3
-k_{\lambda_0}\zeta_2\zeta_3
\right)\leq\zeta^TQ\zeta,\\
Q=&
\begin{bmatrix}
    4\alpha_0+\lambda^2_0+\lambda_0k_{\lambda_0}+\frac{\lambda_0}{2}&0&0\\
    0&2k_{\alpha_0}k^2_{\lambda_0}+\lambda_0k_{\lambda_0}+\frac{k_{\lambda_0}}{2}&0\\
    0&0&\frac{\lambda_0+k_{\lambda_0}}{2}
\end{bmatrix}.
\end{split}
\end{equation}
With (\ref{eqn:deltaV}), equation (\ref{eqn:derivative lyapunov1}) becomes
\begin{equation}\label{eqn:derivative lyapunov2}
\begin{split}
\dot{V}\leq-\left(
l(t)\frac{\lambda_{min}(\Omega_1)}{\lambda^{\frac{1}{2}}_{max}(P)}
-\frac{F\|q_1\|_2}{\lambda^{\frac{1}{2}}_{min}(P)}
\right)V^{\frac{1}{2}}
-\left(l(t)\frac{\lambda_{min}(\Omega_2)}{\lambda_{max}(P)}-
\frac{\dot{l}}{2l(t)}\frac{\lambda_{max}(Q)}{\lambda_{min}(P)}
\right)V,
\end{split}
\end{equation}
For simplicity, we define
\begin{equation}\label{eqn:parameter definition}
\begin{split}
\gamma_1&=\frac{\lambda_{min}(Q)}{\lambda^{\frac{1}{2}}_{max}(P)},\ \
\gamma_2=\frac{F\|q_1\|_2}{\lambda^{\frac{1}{2}}_{min}(P)},\ \
\gamma_3=\frac{\lambda_{min}(\Omega_2)}{\lambda_{max}(P)},\ \
\gamma_4=\frac{\lambda_{max}(Q)}{2\lambda_{min}(P)}.
\end{split}
\end{equation}
where $\gamma_1,\gamma_2,\gamma_3$ and $\gamma_4$ are all positive constants.
Thus, equation (\ref{eqn:derivative lyapunov2}) can be simplified as
\begin{equation}\label{eqn:derivative lyapunov3}
\begin{split}
    \dot{V}\leq-\left(l(t)\gamma_1-\gamma_2\right)V^{\frac{1}{2}}
-\left(l(t)\gamma_3-\frac{\dot{l}}{l(t)}\gamma_4\right)V,
\end{split}
\end{equation}
Because $\dot{l}(t)\geq 0$ such that the terms $l(t)\gamma_1-\gamma_2$ and
$l(t)\gamma_3-\frac{\dot{l}}{l(t)}\gamma_4$ are positive in finite time,
it follows from (\ref{eqn:derivative lyapunov3}) that
\begin{equation}\label{eqn:final lyapunov}
 \dot{V}\leq -c_1V^{\frac{1}{2}}-c_2V.
\end{equation}
where $c_1$ and $c_2$ are positive constants.
By the comparison principle \citep{khalil1996nonlinear}, it follows
that $V(\zeta)$ and therefore $\zeta$ converge to zero in finite time.
Thus, Theorem \ref{theorem:5} is proven.
\end{proof}
It follows from Theorem \ref{theorem:5} that
when the sliding motion takes place, $e_1=0$ and $\dot{e}_1=0$.
Thus, the output-error equivalent injection $\mu(e_1)$ can be obtained directly from equation (\ref{eqn:dynamicobservererror1}):
\begin{equation}\label{eqn:equivalent injection}
\mu(e_1)=-\frac{u_1}{L}e_2 - \frac{u_2}{L}e_3,
\end{equation}
Substitute (\ref{eqn:equivalent injection}) into the error system
(\ref{eqn:dynamicobservererror2}) and (\ref{eqn:dynamicobservererror3}),
the following reduced-order system is obtained:
\begin{equation}\label{eqn:reduced order dynamics}
\left\{
\begin{split}
\dot{e}_2 =& k_1(\frac{u_1}{L}e_2 + \frac{u_2}{L}e_3),\\
\dot{e}_3 =& k_2(\frac{u_1}{L}e_2 + \frac{u_2}{L}e_3).
\end{split}
\right.
\end{equation}
\begin{prop}\label{prop:1}
Consider the reduced-order system (\ref{eqn:reduced order dynamics})
with the switching gains $k_1$ and $k_2$ given by (\ref{ineqn:k1k2}).
Then, the trajectories of the error system
(\ref{eqn:reduced order dynamics})  converge to zero exponentially if
the following two conditions are satisfied \rm{\citep{loria2002uniform}}:
\begin{itemize}
  \item There exists a constant $\phi_M>0$ such that for all $t\geq 0$ and all $u\in\mathcal{D}$,
  where $\mathcal{D}\in\mathbb{R}^2$ is a closed, compact subset, such that $\|\Psi(t,u)\|\leq\phi_M$,
where
\begin{equation}\label{conditon:1}
    \Psi(t,u)=
\begin{bmatrix}
    \sqrt{\frac{k}{L}}u_1(t)& \sqrt{\frac{k}{L}}u_2(t)
\end{bmatrix}
^T,
\end{equation}
  \item There exist constants $T_1>0$ and $\mu>0$ such that
\begin{equation}\label{assump:2}
    \int^{t+T_1}_t \Psi(\tau,u)\Psi^T(\tau,u)d\tau\geq\mu I>0,\ \ \forall  t\geq 0.
\end{equation}
\end{itemize}
\end{prop}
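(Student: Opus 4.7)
The plan is to recognize the reduced-order error dynamics as a standard ``gradient'' linear time-varying system driven by the regressor $\Psi(t,u)$, and then invoke the classical persistence-of-excitation argument that underlies \citep{loria2002uniform}. The first step is to substitute the switching gains $k_1=-\kappa u_1$ and $k_2=-\kappa u_2$ (valid in sliding, where $|e_1|=0$) into \eqref{eqn:reduced order dynamics}. Setting $e=[e_2,e_3]^T$, the system collapses into the compact form
\begin{equation*}
\dot e=-\frac{\kappa}{L}\begin{bmatrix}u_1^2 & u_1 u_2\\ u_1 u_2 & u_2^2\end{bmatrix}e=-\Psi(t,u)\Psi^T(t,u)\,e,
\end{equation*}
with $\Psi$ exactly as defined in \eqref{conditon:1} after identifying $k=\kappa$. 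Thus the problem is reduced to proving uniform exponential stability of $\dot e=-\Psi\Psi^T e$ under the bulleted hypotheses.

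Next, I would use the natural Lyapunov candidate $V(e)=\tfrac12 e^T e$. A direct computation gives $\dot V=-e^T\Psi\Psi^T e=-\|\Psi^T e\|^2\le 0$, so $\|e(t)\|$ is already monotone nonincreasing. To upgrade this to exponential decay, I would integrate over a window of length $T_1$ and combine the two bulleted conditions: the PE condition supplies $\int_t^{t+T_1}\Psi\Psi^T d\tau\ge \mu I$, while the uniform bound $\|\Psi\|\le\phi_M$ together with $\dot V\le 0$ controls how much $e(\tau)$ can drift from $e(t)$ on $[t,t+T_1]$. The standard consequence is an inequality of the form $V(t+T_1)\le(1-\eta)V(t)$ for some $\eta\in(0,1)$ depending only on $\mu,\phi_M,T_1$, which by iteration yields uniform exponential convergence of $e$ to zero.

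The main obstacle, and the only nontrivial step, is the passage from persistence of excitation of $\Psi$ to a lower bound of the form
\begin{equation*}
\int_t^{t+T_1}\|\Psi^T(\tau)e(\tau)\|^2 d\tau\ \ge\ \delta\,\|e(t)\|^2,
\end{equation*}
since the integrand couples the \emph{known} PE bound on $\Psi\Psi^T$ with the \emph{unknown} trajectory $e(\tau)$. The classical trick is to write $e(\tau)=e(t)+\int_t^{\tau}\dot e(s)\,ds$, bound the drift by $\phi_M^2(\tau-t)\sup_{s\in[t,\tau]}\|e(s)\|$ using $\|\Psi\|\le\phi_M$, apply Cauchy--Schwarz, and absorb the drift term into the dominant PE contribution; the net result is $\delta=\delta(\mu,\phi_M,T_1)>0$. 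This is precisely the gradient-LTV calculation developed in \citep{loria2002uniform}, so once the hypotheses of that reference are matched with the two bullets of the proposition, the proof is completed by direct invocation of that result.
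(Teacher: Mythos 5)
Your proposal follows essentially the same route as the paper's own proof: substitute the switching gains to collapse the reduced-order dynamics into the gradient form $\dot e=-\Psi\Psi^{T}e$, match the boundedness and persistence-of-excitation hypotheses, and conclude exponential stability by invoking \citep{loria2002uniform}; your additional sketch of the window-integration/drift argument merely unpacks what that reference supplies. You also correctly flag the $k$ versus $\kappa$ notational mismatch between (\ref{ineqn:k1k2}) and (\ref{conditon:1}), which the paper glosses over.
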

\begin{proof}
Defining the vector $e_V^T=\begin{bmatrix}e_2&e_3\end{bmatrix}$ and
substituting $k_1$ and $k_2$ in (\ref{ineqn:k1k2}) into the system in (\ref{eqn:reduced order dynamics}),
it follows that
\begin{equation}\label{eqn:reduced order dynamics1}
\begin{split}
\dot{e}_V=-\frac{k}{L}
\begin{bmatrix}
u^2_1&u_1u_2\\u_1u_2&u^2_2
\end{bmatrix}
e_V=-\Psi(\tau,u)\Psi^T(\tau,u)e_V,
\end{split}
\end{equation}
Because the switch signals $u$ are generated by a simple Pulse-width modulation(PWM),
$\|\Psi(t,u)\|\leq \sqrt{\frac{k}{L}}\|u\|\leq\sqrt{2\frac{k}{L}}$
and $T_1$ can be chosen as one period of the switching sequence to verify
the condition in (\ref{assump:2}).
Given that the conditions (\ref{conditon:1}) and (\ref{assump:2}) hold,
it follows from \citep{loria2002uniform} that the reduced-order system (\ref{eqn:reduced order dynamics}) is exponentially stable.
Thus, Proposition \ref{prop:1} is proven.
\end{proof}
\begin{remark}\label{rmk:applicable}
The proposed observer (\ref{eqn:stw observer}) is applicable to all converters that fall under the class of systems represented by (\ref{eqn:hybrid dynamic1}). This class applies to a wide range of hybrid switched-affine multi-cell converter systems (see \rm{\cite{kouro2010recent}}).
\end{remark}
\section{SIMULATION RESULTS}
The performance of the proposed adaptive-gain SOSML observer was evaluated
through simulations. To demonstrate the improvement of the proposed
strategy, the results are compared with a Luenberger switched observer
given in \cite{Riedinger20101047}.
The simulation parameters are shown in Table \ref{table_parameter}.
Furthermore, the load resistance was varied up to $50\%$ to demonstrate the robustness of the proposed observer.
\begin{table}[h]
\begin{center}
\caption{Main Parameters of Simulation Model}\label{table_parameter}
\begin{tabular}{cc}
\hline
System Parameters & Values\\
\hline
DC voltage($E$) & 150 $V$\\
\hline
Capacitors($c_1,c_2$) & 40 $\mu F$\\
\hline
Load resistance($R$) & 131 $\Omega$\\
\hline
Load Inductor($L$) &10 $mH$\\
\hline
The chopping frequency & 5 $kHz$\\
\hline
The sampling period & 5 $\mu s$\\
\hline
\end{tabular}
\end{center}
\end{table}

The system in (\ref{eqn:new system dynamic}) is rewritten in a form convenient for designing the Luenberger switched observer \citep{Riedinger20101047}:
\begin{equation}\label{eqn:Luenberger observer}
\left\{
\begin{split}
&\dot{\hat{I}}=-\frac{R}{L}\hat{I} + \frac{E}{L}u_3 - \frac{\hat{V}_{c_1}}{L}u_1-
\frac{\hat{V}_{c_2}}{L}u_2 + \kappa_0e_1,\\
&\dot{\hat{V}}_{c_1}=\frac{u_1}{c_1}I + (\kappa_1u_1+\kappa_3u_2+\kappa_5u_3)e_1,\\
&\dot{\hat{V}}_{c_2}=\frac{u_2}{c_2}I + (\kappa_2u_1+\kappa_4u_2+\kappa_6u_3)e_1.
\end{split}
\right.
\end{equation}
The error dynamics of $e^T=\begin{bmatrix}e_1&e_2&e_3\end{bmatrix}$
are given by equations (\ref{eqn:new system dynamic})
and (\ref{eqn:Luenberger observer}),
\begin{equation}\label{eqn:error Luenberger observer}
\begin{split}
\dot{e}=(\tilde{A}_0+u_1\tilde{A}_1+u_2\tilde{A}_2+u_3\tilde{A}_3)e,
\end{split}
\end{equation}
where $\tilde{A}_i=A_i-K_iC,\ \ i=0,1,2,3$,
$K^T_0=(\kappa_0,0,0),\ \ K^T_1=(0,\kappa_1,\kappa_2),\ \ K^T_2=(0,\kappa_3,\kappa_4),
\ \ K^T_3=(0,\kappa_5,\kappa_6)$.
The constant gains $\kappa_0,\kappa_1,\kappa_2,\kappa_3,\kappa_4,\kappa_5$ and $\kappa_6$ are chosen such that there exists a positive matrix $\tilde{P}$ that satisfies
$\tilde{A}^T_i\tilde{P}+\tilde{P}\tilde{A}_i\leq 0$, for $i=0,1,2,3$.
All the details of the parameters can be found in \cite{Riedinger20101047}.

For simulation purposes, the initial values were chosen as
\begin{equation}\label{eqn:initial condition}
    {V}_{c_1}(0)=5V,\ \ {V}_{c_2}(0)=10V,\ \ \hat{V}_{c_1}(0)=0V, \ \ \hat{V}_{c_2}(0)=0V.
\end{equation}
The parameters of the adaptive SOSML algorithm given by
(\ref{eqn:adaptive gains}) and (\ref{eqn:adaptive gains rt}) were chosen as
$\lambda_0=2,\ \ \alpha_0=4,\ \ k_{\lambda_0}=2.5,\ \ k_{\alpha_0}=20$ and
$k=6\cdot 10^5$.
The parameter of the switching gains in (\ref{ineqn:k1k2}) was chosen to be $\kappa=20$.
The inputs of the switches $u$ were generated by a simple PWM with
a chopping frequency 5 $kHz$, the sampling period was 200 $kHz$.

Figure \ref{vc12nonoise} shows the estimates of the capacitor voltages
$V_{c_1}$ and $V_{c_2}$ and the errors when the system is not affected by output noise and without load variation.
Both the adaptive-gain SOSML observer
and the Luenberger switched observer can achieve desired performance.

The estimates of the capacitor voltages
$V_{c_1}, V_{c_2}$ and the errors
when the system is affected by the output noise
and under load variations up to $50\%$ are shown in Figure \ref{vc12noise}.
The system output noise was included to test the robustness
of the proposed observers \citep{bejaranomulticell}, and this is shown in Figure \ref{fig:output noise}.
It is clear from the figures, the proposed observer is robust and the effect of the noise is essentially imperceptible. On the other hand, the
Luenberger switched observer is more sensitive to the noise and
the load variation.
From \cite{levent1998robust}, we know that the SOSM observer works
as a robust exact differentiator, and for this reason we obtain better performance
from the proposed observer compared with the Luenberger switched observer.
Figure \ref{Adaptive-gains} shows that the adaptive law of
(\ref{eqn:adaptive gains}) and (\ref{eqn:adaptive gains rt}) is effective
under load variations.
\begin{remark}\label{computational issues1}
From implementation point of view, the calculations required for the the adaptive-gain SOSML (\ref{eqn:stw observer}) are slightly more intensive than those of Leunberger observer. However, the correction term $\mu(e_1)$ and two design parameters $k_1, k_2$ entails low real-time computational burden, as the computational capabilities of digital computers have greatly increased and the additional processing requirements can be easily accomplished \rm{(see \cite{6153362,oettmeier2009mpc,lienhardt2007digital})}. As $\mu(e_1)$ is calculated only once, regardless of the number of cells, the complexity of the calculation increases linearly with the number of cells. This means that, for an $n$-cell system with $2^n$ permutations $(n>3)$, the additional computational burden comes only from the calculation of new parameters $k_3,\cdots,k_{n-1}$.
\end{remark}
\section{CONCLUSIONS}
In this paper, a novel adaptive-gain SOSML observer was presented
for a multi-cell power converter system, which belongs to a class of hybrid systems.
With the use of $Z(T_N)$-observability, the capacitor voltages were estimated
under a certain condition of the input sequences, even though the system did not satisfy the observability matrix rank condition. That is, the system becomes observable in the sense of $Z(T_N)$-observability after several switching sequences.
The robustness of the proposed observer and the Luenberger switched observer
were compared in the presence of load resistance variations
and output measurement noise.
It was found that the adaptive-gain SOSML observer was more robust than
the Luenberger switched observer.
Two main advantages of the proposed method are:
1) Only one parameter $k$ has to be tuned;
2) A-priori knowledge of the perturbation bounds is not required.
\begin{figure}[H]
\begin{center}
\subfigure[Estimate of $V_{c_1}$]{\label{fig:vc1_no_noise}
\includegraphics[width=2.2in]{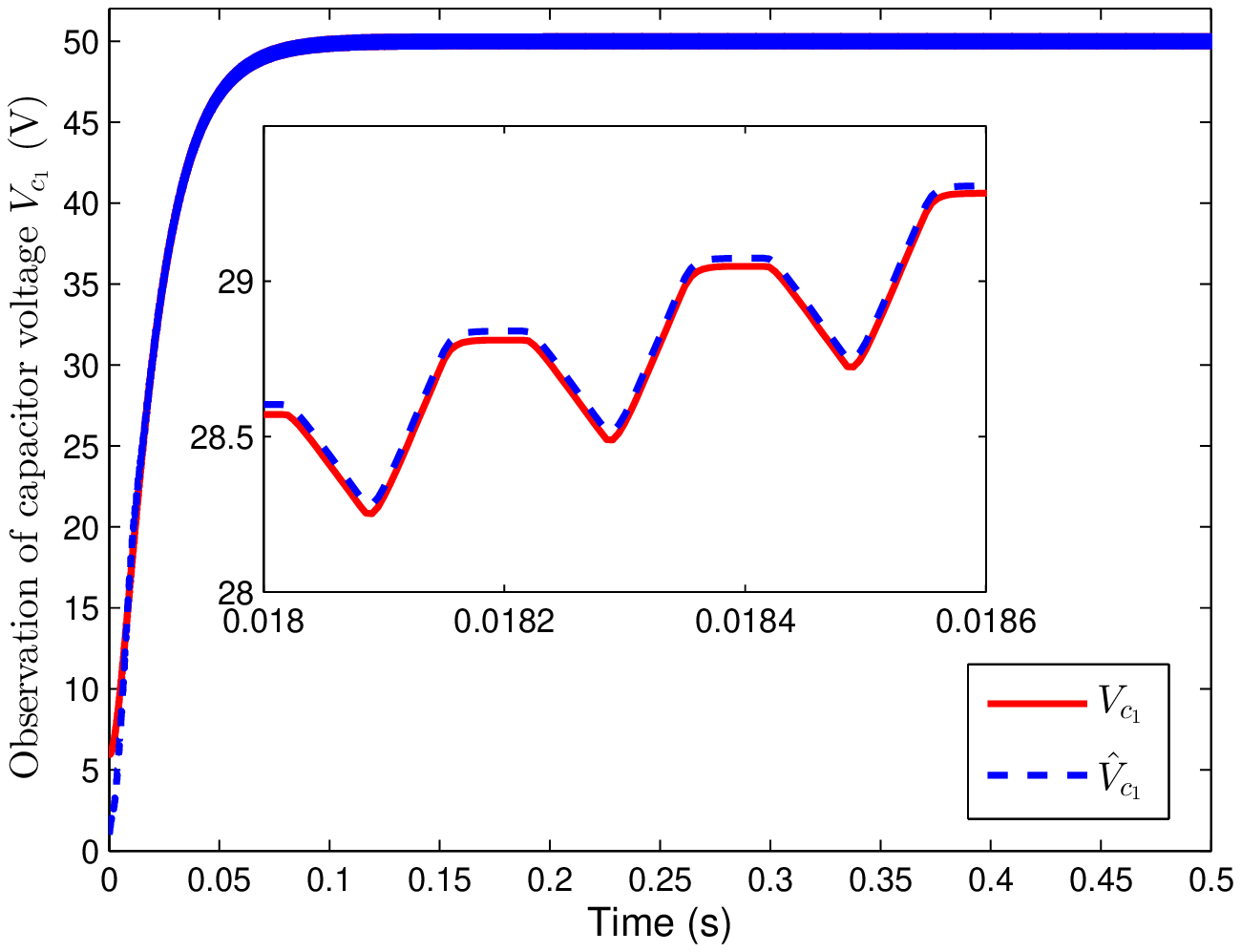}}
\quad
\subfigure[Estimate of $V_{c_1}$]{\label{fig:vc1_no_noise_luber}
\includegraphics[width=2.2in]{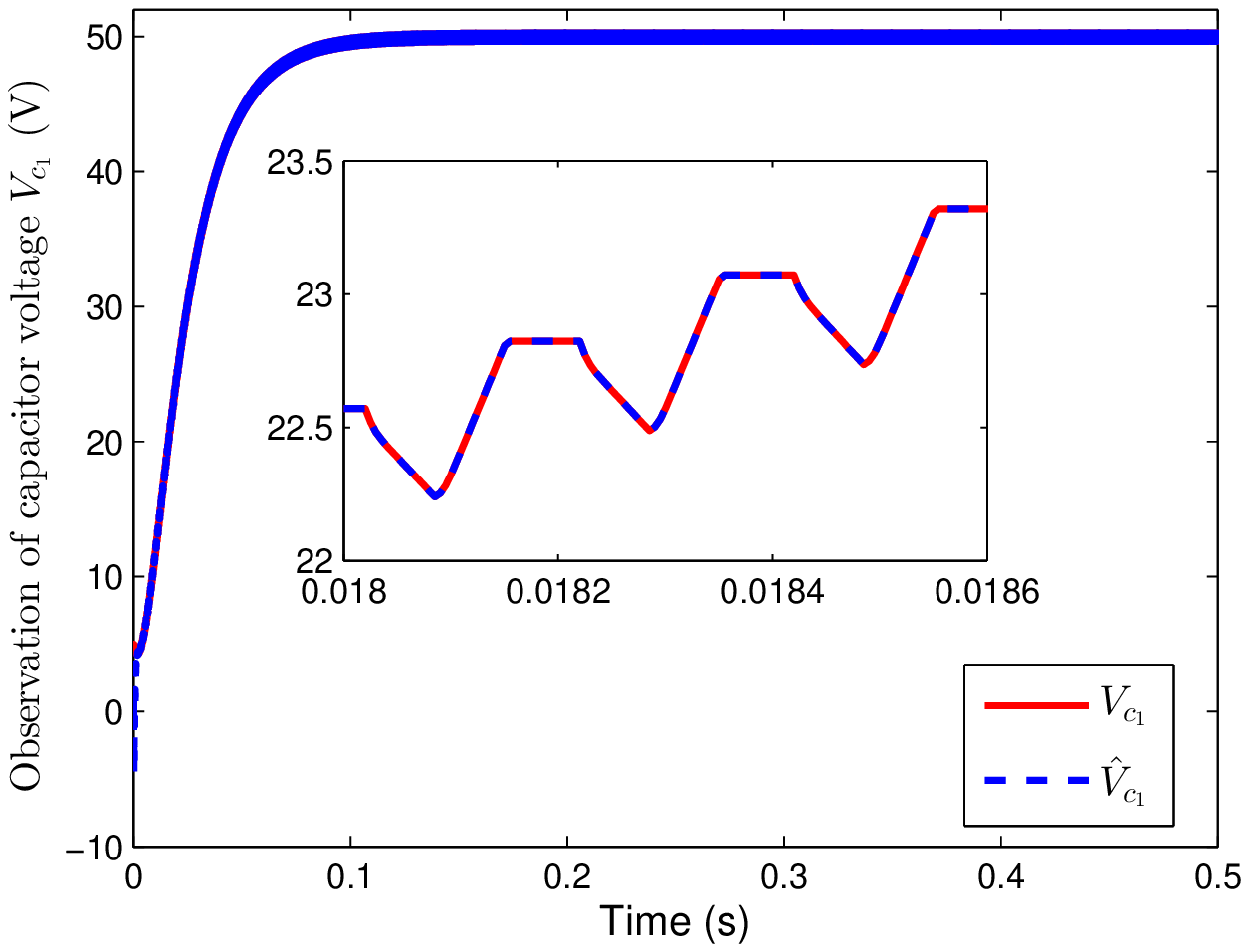}}\\
\subfigure[Estimate of $V_{c_2}$]{\label{fig:vc2_no_noise}
\includegraphics[width=2.2in]{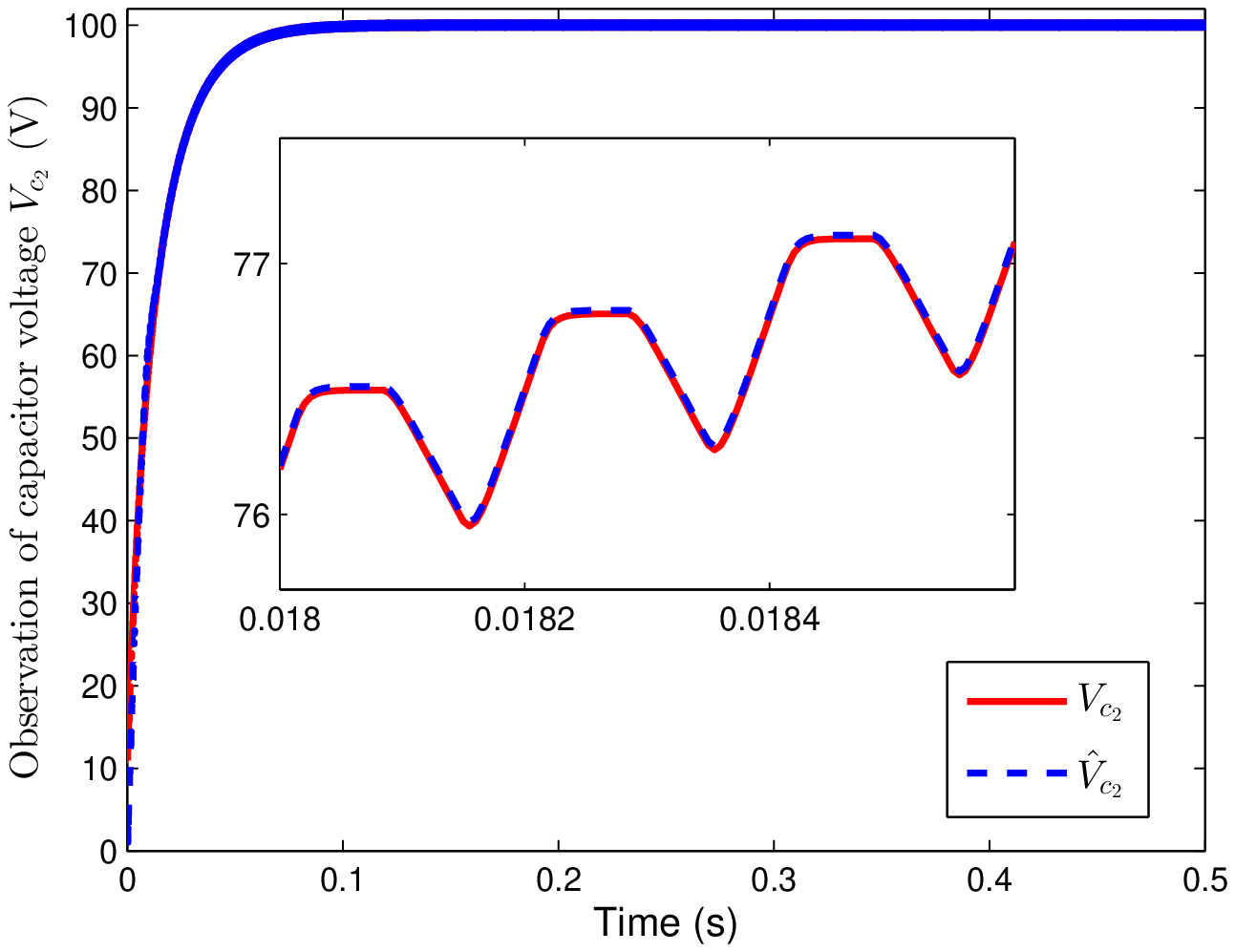}}
\quad
\subfigure[Estimate of $V_{c_2}$]{\label{fig:vc2_no_noise_luber}
\includegraphics[width=2.2in]{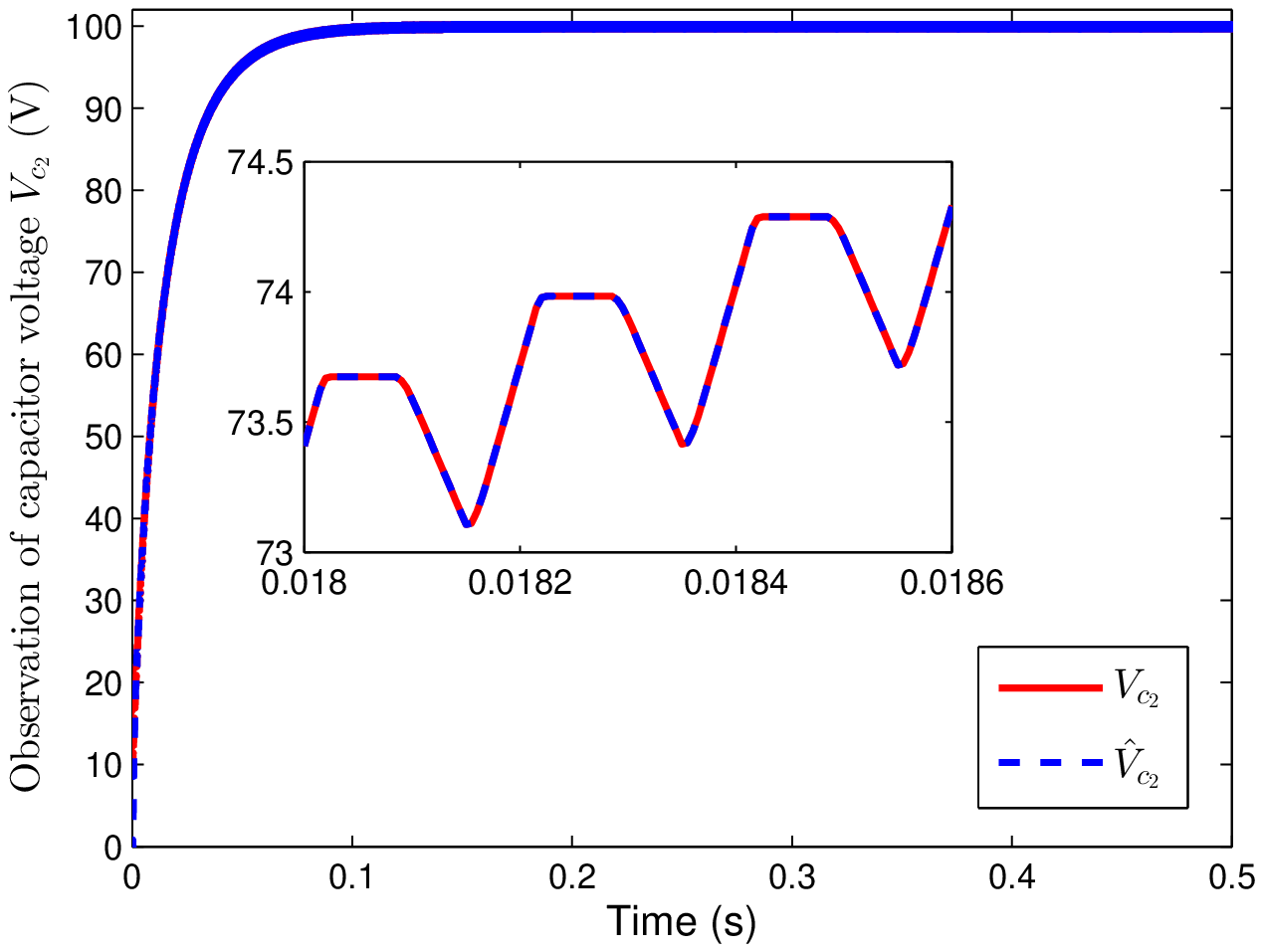}}\\
\subfigure[Estimation errors $e_{V_{c_1}},e_{V_{c_2}}$]{\label{fig:vc12_err_no_noise}
\includegraphics[width=2.2in]{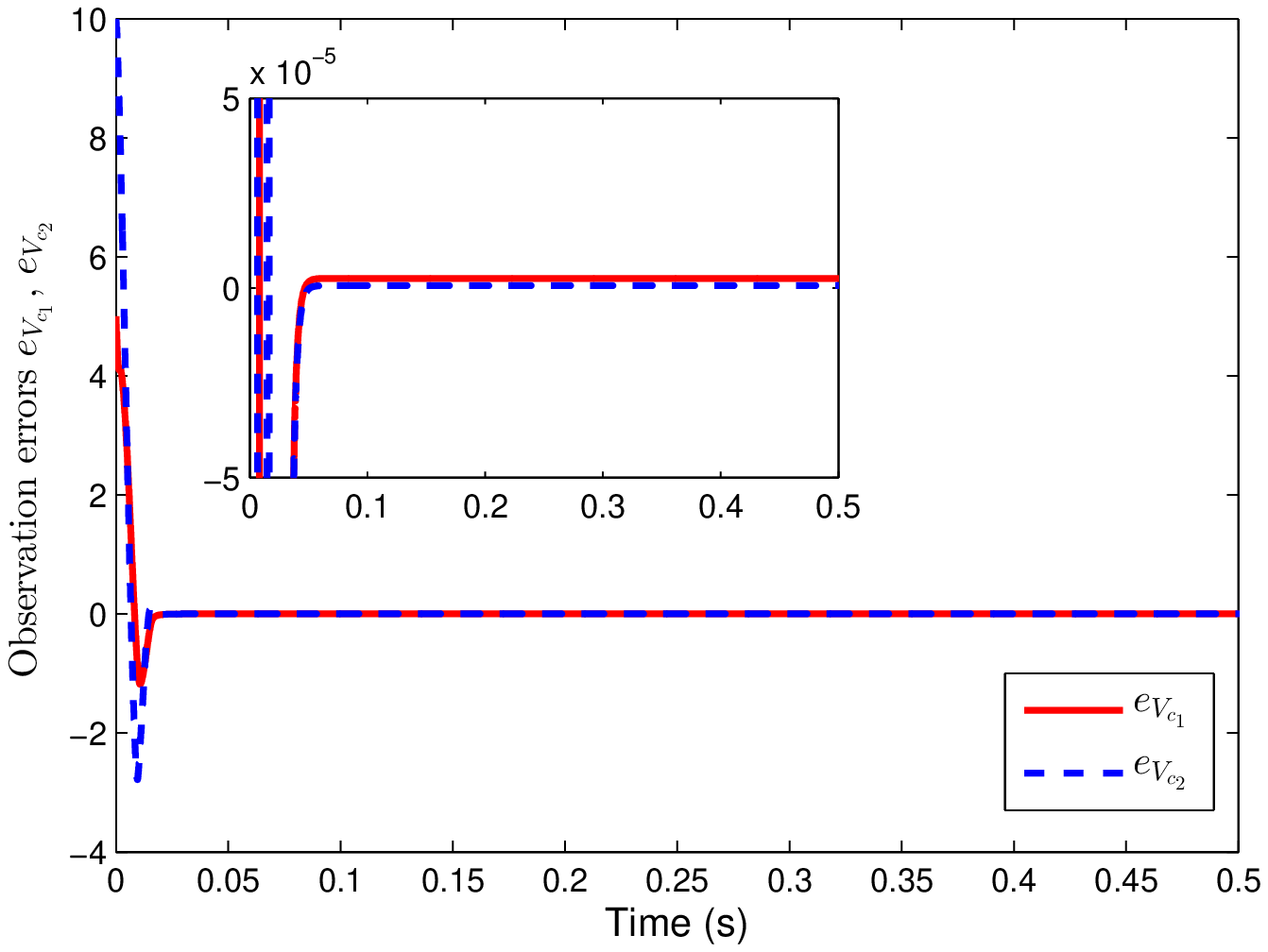}}
\quad
\subfigure[Estimation errors $e_{V_{c_1}},e_{V_{c_2}}$]{\label{fig:vc12_err_no_noise_luber}
\includegraphics[width=2.2in]{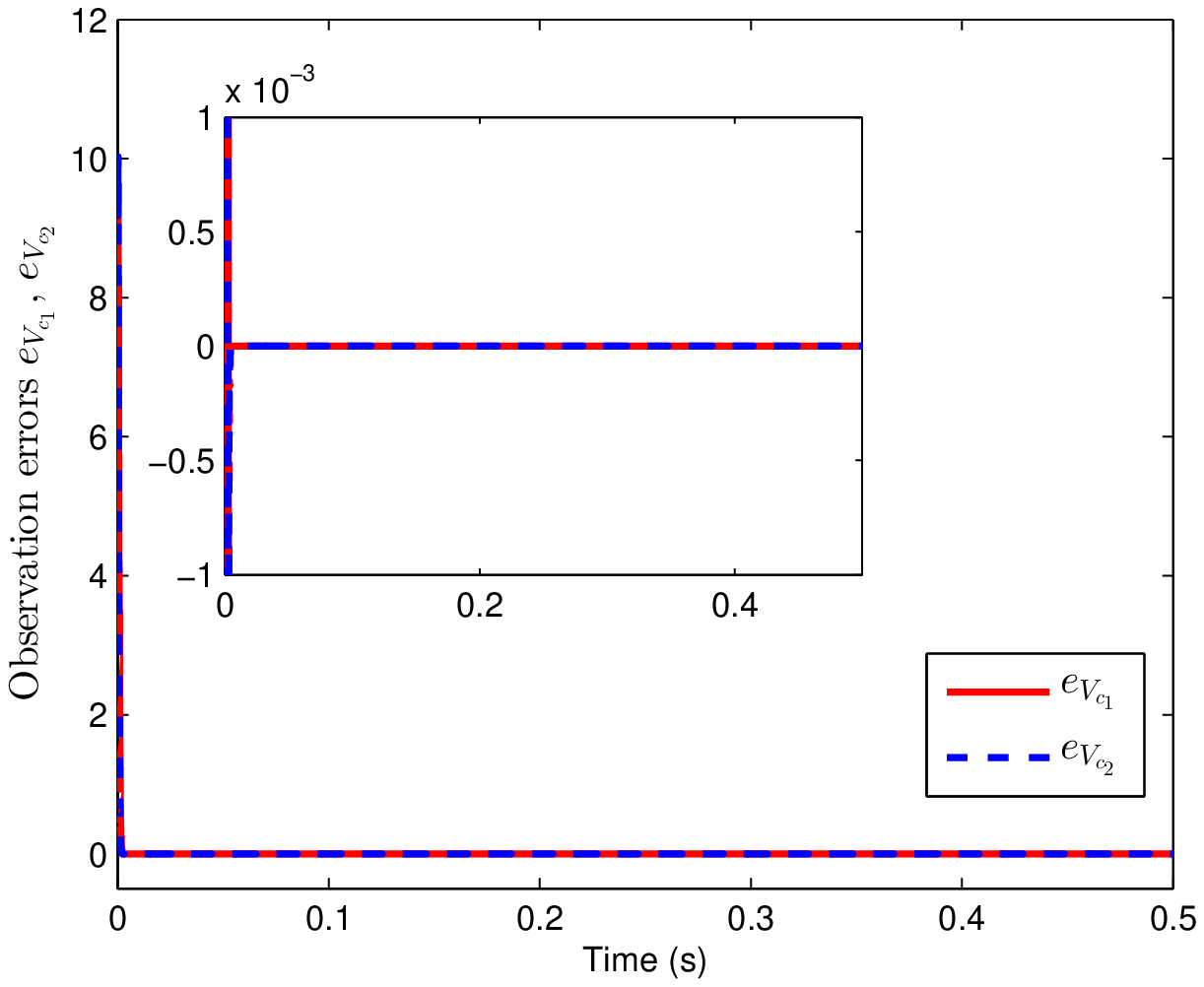}}
\caption{Estimate of capacitor voltage $V_{c_1},V_{c_2}$ and its errors
for adaptive-gain SOSML (\ref{fig:vc1_no_noise},
\ref{fig:vc2_no_noise}, \ref{fig:vc12_err_no_noise})
and Luenberger switched observer (\ref{fig:vc1_no_noise_luber},
\ref{fig:vc2_no_noise_luber}, \ref{fig:vc12_err_no_noise_luber}), respectively,
when the system output is not affected by noise and load variations}
\label{vc12nonoise}
\end{center}
\end{figure}
\begin{figure}[H]
\begin{center}
\subfigure[Estimate of $V_{c_1}$]{\label{fig:vc1_noise}
\includegraphics[width=2.2in]{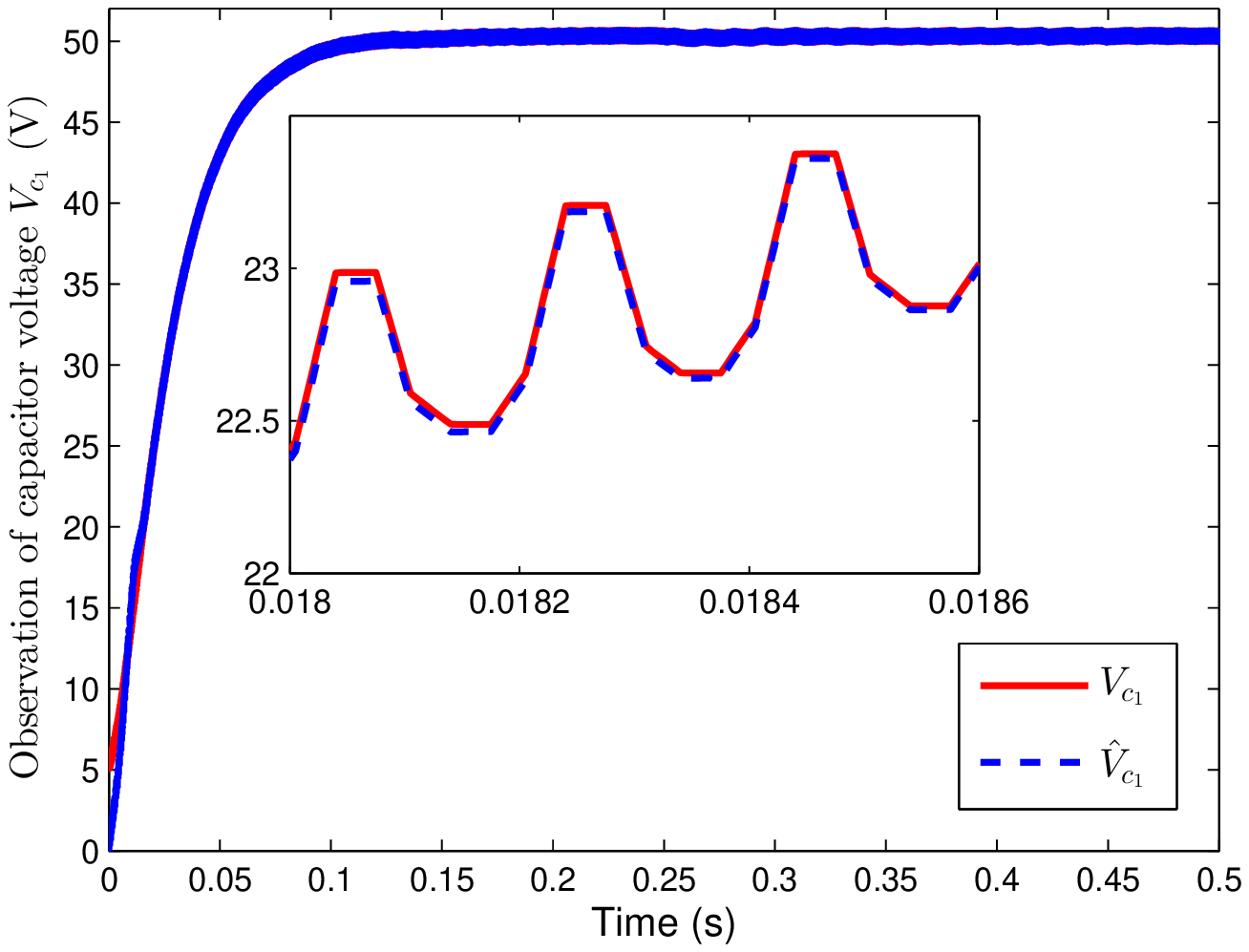}}
\quad
\subfigure[Estimate of $V_{c_1}$]{\label{fig:vc1_noise_luber}
\includegraphics[width=2.2in]{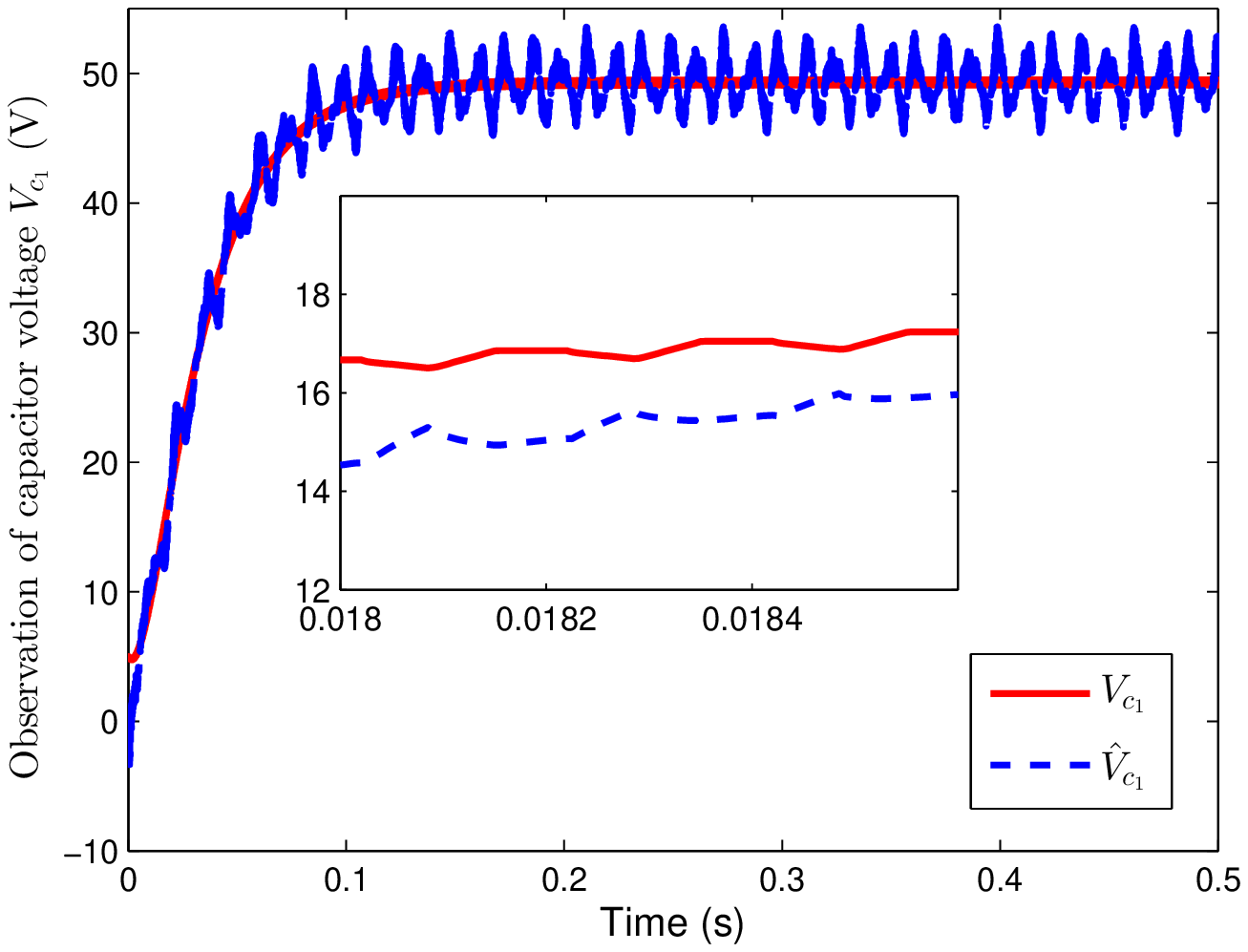}}\\
\subfigure[Estimate of $V_{c_2}$]{\label{fig:vc2_noise}
\includegraphics[width=2.2in]{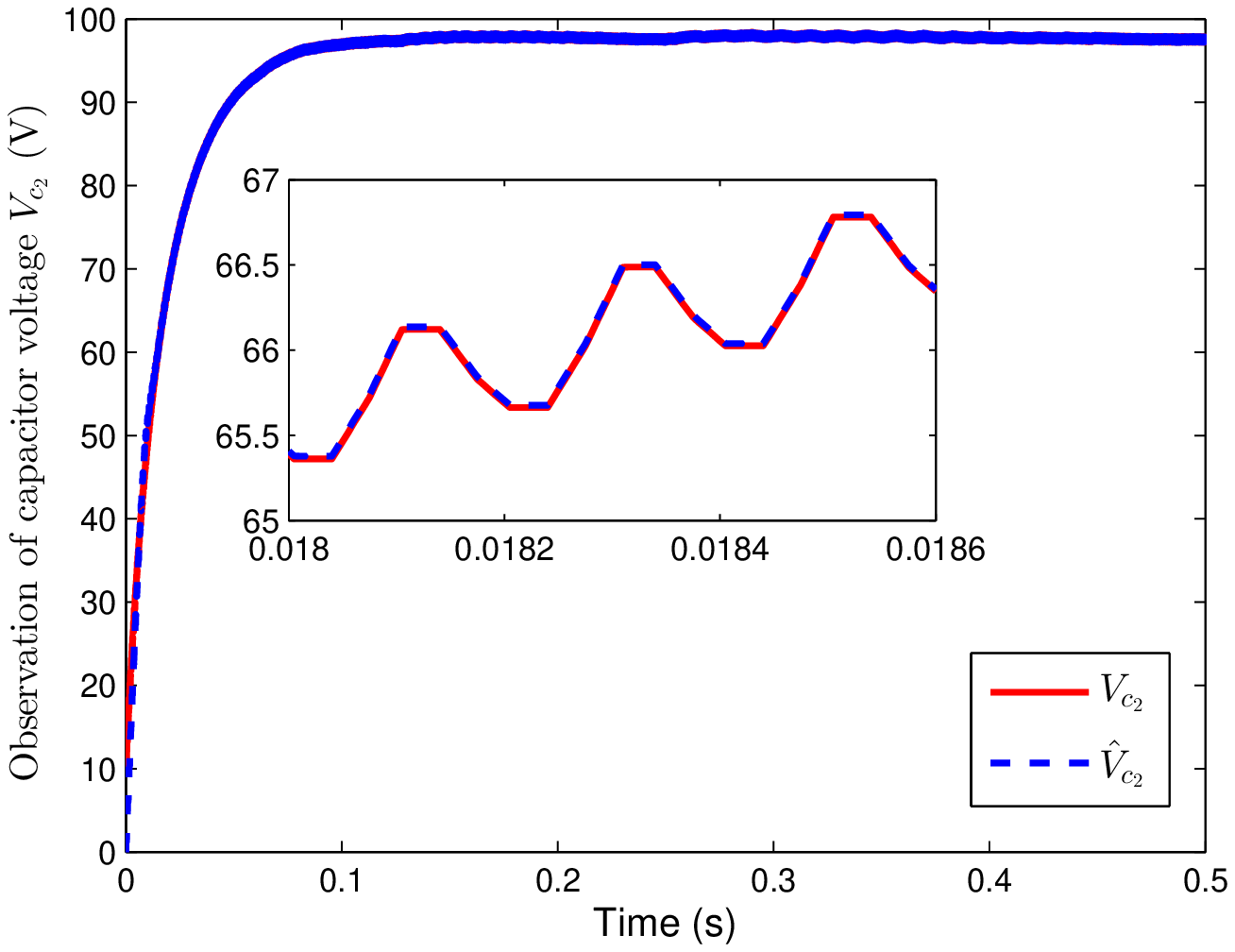}}
\quad
\subfigure[Estimate of $V_{c_2}$]{\label{fig:vc2_noise_luber}
\includegraphics[width=2.2in]{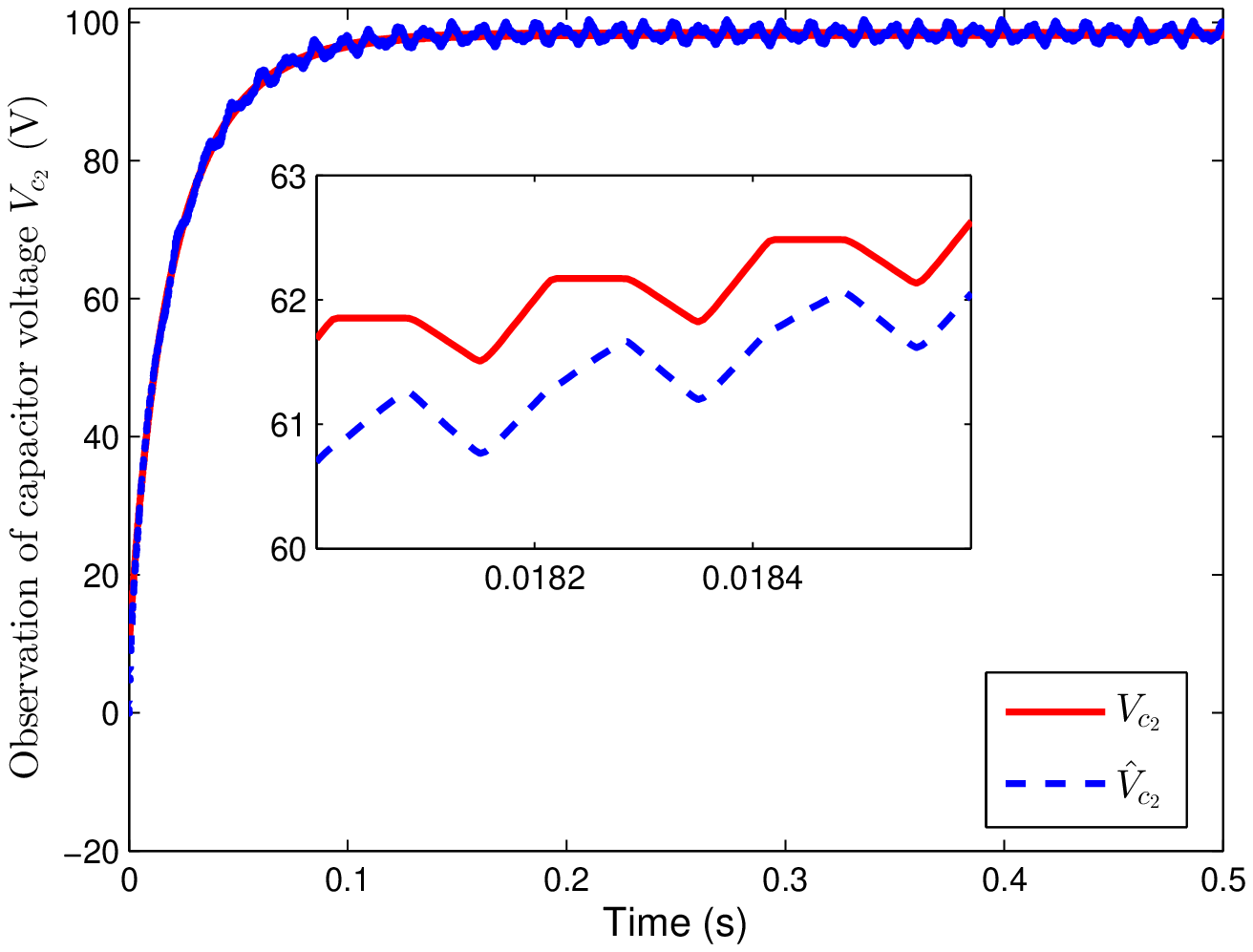}}\\
\subfigure[Estimation errors $e_{V_{c_1}},e_{V_{c_2}}$]{\label{fig:vc12_err_noise}
\includegraphics[width=2.2in]{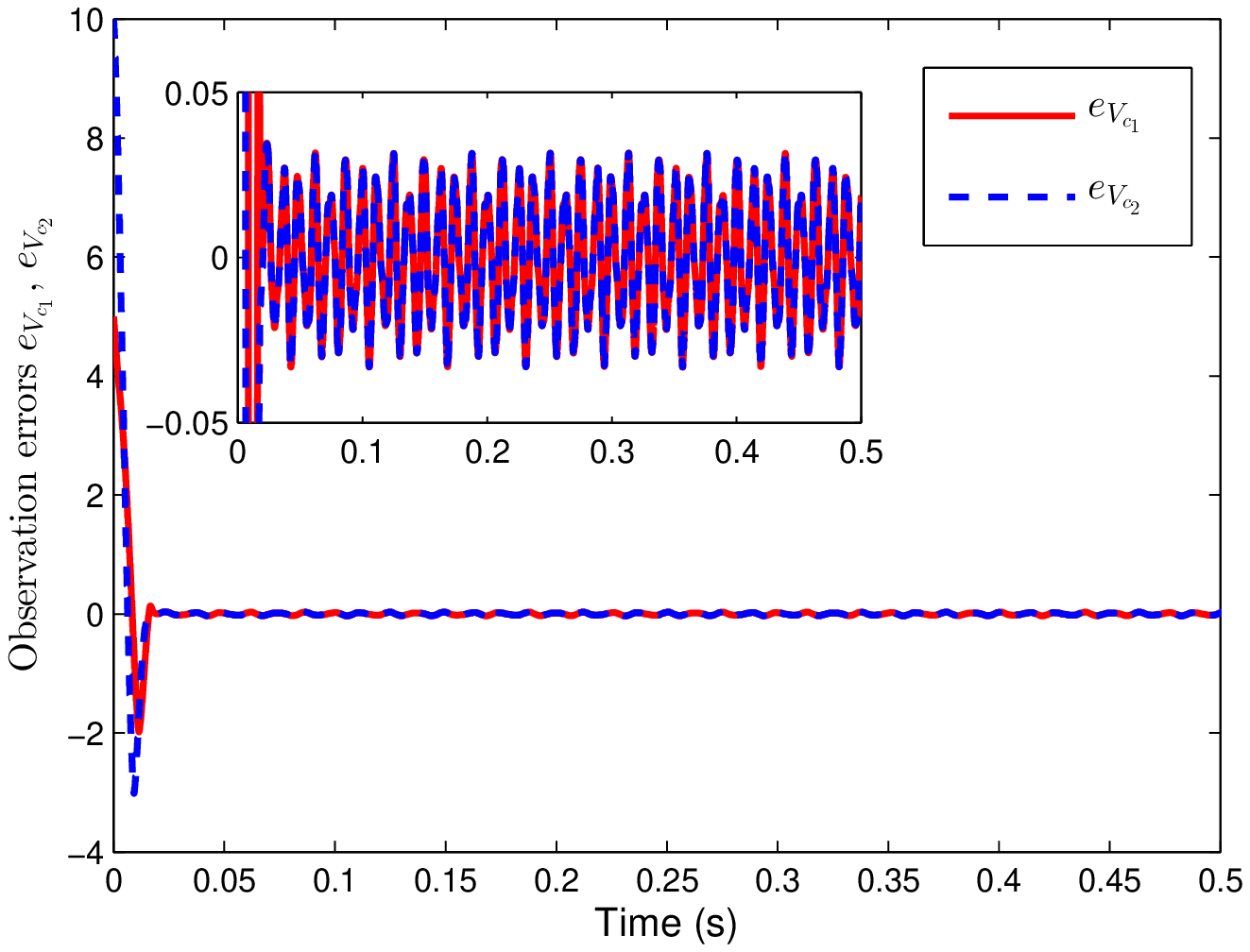}}
\quad
\subfigure[Estimation errors $e_{V_{c_1}},e_{V_{c_2}}$]{\label{fig:vc12_err_noise_luber}
\includegraphics[width=2.2in]{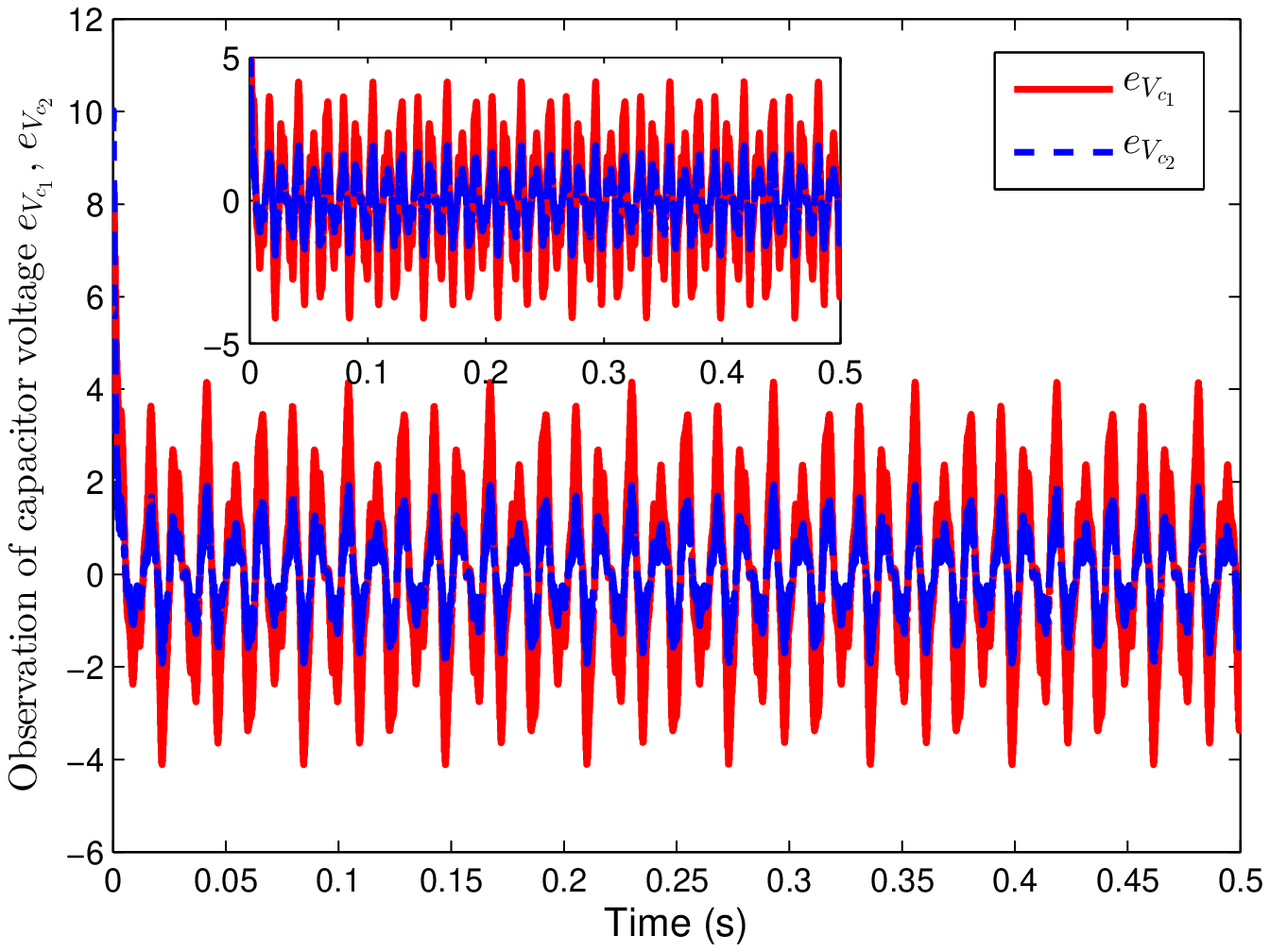}}
\caption{Estimate of capacitor voltage $V_{c_1},V_{c_2}$ and its errors
for adaptive-gain SOSML (\ref{fig:vc1_noise},
\ref{fig:vc2_noise}, \ref{fig:vc12_err_noise})
and Luenberger switched observer (\ref{fig:vc1_noise_luber},
\ref{fig:vc2_noise_luber}, \ref{fig:vc12_err_noise_luber}), respectively,
when the system output is affected by noise
and load variations $\hat{R}=1.5R$}
\label{vc12noise}
\end{center}
\end{figure}
\begin{figure}[H]
\begin{center}
\includegraphics[width=8.4cm]{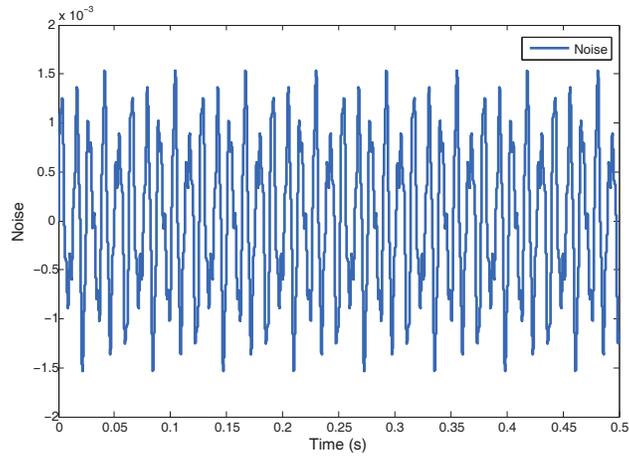}
\caption{System output noise}
\label{fig:output noise}
\end{center}
\end{figure}
\begin{figure}[H]
\begin{center}
\includegraphics[width=8.4cm]{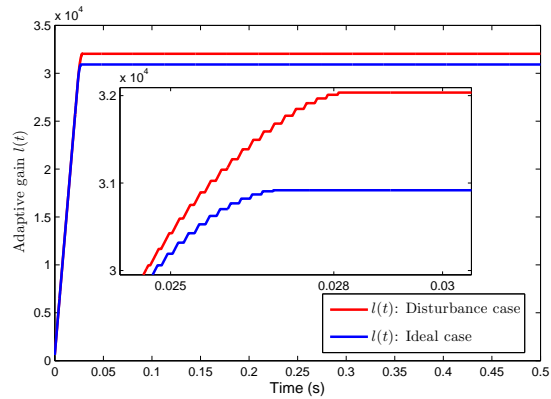}
\caption{Adaptive law $l(t)$ of the SOSML algorithm}
\label{Adaptive-gains}
\end{center}
\end{figure}
\bibliographystyle{model5-names}
\bibliography{multicell_CEP}
\end{document}